\theoremstyle{plain}%
\newtheorem{Theorem}{Theorem}[section] %
\newtheorem{Lemma}[Theorem]{Lemma}
\newtheorem{Proposition}[Theorem]{Proposition} %
\theoremstyle{definition}%
\newtheorem{Assumption}[Theorem]{Assumption}%
\theoremstyle{remark}%
\newtheorem{Remark}[Theorem]{Remark} %
\newcommand{\cA}{\mathcal{A}}
\newcommand{\cD}{\mathcal{D}}
\newcommand{\cE}{\mathcal{E}}
\newcommand{\cF}{\mathcal{F}}
\newcommand{\cM}{\mathcal{M}}
\newcommand{\cS}{\mathcal{S}}
\newcommand{\bE}{\mathbb{E}}
\newcommand{\bL}{\mathbb{L}}
\newcommand{\bN}{\mathbb{N}}
\newcommand{\bP}{\mathbb{P}}
\newcommand{\bR}{\mathbb{R}}
\newcommand{\bY}{\mathbb{Y}}
\newcommand{\bZ}{\mathbb{Z}}
\newcommand{\e}{\varepsilon}
\newcommand{\Pas}{{\mathbb{P}\text{-a.s.}}}
\newcommand{\dbracc}[1]{[\kern-0.15em[ #1 ]\kern-0.15em]}
\newcommand{\dbraco}[1]{[\kern-0.15em[ #1 [\kern-0.15em[}
\newcommand{\dbraoc}[1]{]\kern-0.15em] #1 ]\kern-0.15em]}
\newcommand{\dbraoo}[1]{]\kern-0.15em] #1 [\kern-0.15em[}
\newcommand{\be}{\begin{equation}}
\newcommand{\ee}{\end{equation}}
\newcommand{\nn}{\nonumber}
\newcommand{\bs}{\begin{split}}
\newcommand{\es}{\end{split}}
\newcommand{\ba}{\begin{aligned}}
\newcommand{\ea}{\end{aligned}}
\renewcommand{\[}{\left[}
\renewcommand{\]}{\right]}
\renewcommand{\(}{\left(}
\renewcommand{\)}{\right)}
\newcommand{\D  }{\Delta}
\renewcommand{\d }{\Delta}
\DeclareMathOperator{\esssup}{ess\,sup}
\DeclareMathOperator{\essinf}{ess\,inf}
\newcommand{\tr}[1]{\textcolor{black}{#1}}
\numberwithin{equation}{section}
\renewcommand{\[}{\left[}
\renewcommand{\]}{\right]}
\renewcommand{\(}{\left(}
\renewcommand{\)}{\right)}
\begin{document}

\title[Stability of the Epstein-Zin Problem]
{Stability of the Epstein-Zin Problem}
\author{Michael Monoyios}
\address{Michael Monoyios, Mathematical Institute, University of Oxford, Radcliffe Observatory Quarter, Woodstock
Road, Oxford OX2 6GG, UK}
\email{monoyios@maths.ox.ac.uk}
 \author{Oleksii Mostovyi}
 \address{
Oleksii Mostovyi, Department of Mathematics, University of
Connecticut, Storrs, CT 06269, United States}
\email{oleksii.mostovyi@uconn.edu}
\thanks{Oleksii Mostovyi is the corresponding author. He has been supported by the National Science Foundation under grant No. DMS1848339. Part of this work was completed while he was visiting the University of Oxford. } 
\begin{abstract}
We investigate the stability of the Epstein-Zin problem with respect to small distortions in the dynamics of the traded securities. We work in incomplete market model settings, where our parametrization of perturbations allows for joint distortions in returns and volatility of the risky assets and the interest rate. 
Considering empirically the most relevant specifications of risk aversion and elasticity of intertemporal substitution, we provide a condition that guarantees the convexity of the domain of the underlying problem and results in the existence and uniqueness of a solution to it. Then, we prove the convergence of the optimal consumption streams, {the associated wealth processes, the indirect utility processes}, and the value functions in the limit when the model perturbations vanish. 
\end{abstract}
\subjclass[2010]{91G10, 93E20. \textit{JEL Classification:} C61, G11.}
\keywords{stability, Epstein-Zin problem, stochastic differential utility, incomplete market, BSDE, {Epstein-Zin utility}.}
\date{\today}%

\maketitle
\section{Introduction}
Recursive utilities of Epstein-Zin type allow for the incorporation of {\it future} consumption choice into preferences. In the discrete-time environment, this topic goes back to \cite{KrepsPort} and \cite{EpsteinZin}, whereas in continuous-time stochastic settings, 
it was originally investigated in \cite{DuffieEpstein92b}.  These utilities allowed for the resolution of several asset pricing puzzles; see the introduction to \cite{Hao17}  for an overview of this topic. The Epstein-Zin problem remains an active research area.
  Thus, recently, explicit solutions are characterized in \cite{Hao17}, \cite{Kraft17}, and \cite{MX};   for the results in infinite time horizon settings, we refer to  \cite{HerdHob2}, \cite{HerdHob1}, and \cite{huang2}; a finite yet random horizon is considered in \cite{huang1}.

The continuous-time counterpart of a recursive utility is also known as a stochastic differential utility. 
Two constants govern its parametrization. 
One is  the usual risk aversion, and the other is an elasticity of intertemporal substitution (EIS) that specifies the willingness to interchange consumption over time. 
%
As pointed out in \cite[Remark 2.1, p. 231]{Hao17}, the empirically most relevant case corresponds to the relative risk aversion $\gamma>1$ and the elasticity of intertemporal substitution (EIS) $\psi>1$. 
 
 {
 The notion of the well-posedness of a mathematical problem goes back to Hadamard \cite{Had02}, and it  comprises the following three properties for a solution to a given problem to hold:
existence, uniqueness, and continuous dependence on the initial data, where the last property is loosely known as stability. While the existence and uniqueness results (and various characterizations of the solution) for the Epstein-Zin problem are established in the papers mentioned above, the questions of stability in the context of this problem, to the best of our knowledge, have not been answered before. 
 }
 
 {
 An additional motivation for studying stability comes from the fact that in many cases, for example, in the factor model considered in \cite{Kraft17}, the explicit solution ceases to exist under general perturbations of the model parameters, where such perturbations can be associated with a procedure of calibration. In this case, it is important to understand whether the outputs of the problem, such as the optimal consumption, the optimal wealth process, the indirect utility process, and the value function, differ only slightly from the solution to the unperturbed problem admitting an explicit solution.  
 }
 
 {
 In the case of the more traditional additive utility, which corresponds to a particular case of the Epstein-Zin problem ($\gamma = \frac 1\psi$, in the present notations), the questions of stability are studied more, and historically, and they have also followed establishing the existence and uniqueness results. The results on the stability of the outputs to the optimal investment problem with respect to various perturbations and in varying formulations are contained in 
 \cite{JouiniNapp}, \cite{CarRas}, \cite{KarZit11}, \cite{HaoStab17}, \cite{VeraguasSilva}, and \cite{MostovyiFPP}, among others.
 These works do not establish any stability to BSDEs result, in contrast to the present paper, as the analysis of the stability of the optimal investment problem in many formulations relies on different techniques, despite the BSDE-base approach pioneered in \cite{imHuMuller05}.  Thus, compared to the papers on the stability of the traditional utility maximization in various formulations mentioned in this paragraph, we rely on the analysis of BSDE and establish related approximation and stability results in the present work. 
 }

{In view of the previously listed works, one can argue that the literature on the Epstein-Zin problem does not contain its stability analysis. The aim of the present paper is to
give insight into this problem, and thus,} here, we investigate the stability of the Epstein-Zin problem with respect to perturbations of the dynamics of the traded securities. Our parametrization of perturbations allows us to include joint or separate distortions of the interest rate as well as of the return and volatility of the risky assets. We consider the above-described case when 
both the relative risk aversion and EIS exceed one. 
 Our analysis is performed under a weak no-arbitrage condition,
no unbounded profit with bounded risk (NUPBR) introduced in \cite{KarKar07}, which still allows
for the meaningful structure of the underlying problem. 

Our results include a sufficient condition for the convexity of the domain of the primal problem and for the existence and uniqueness of the optimizer to this problem. This condition can be stated as non-emptiness of the dual domain, that is, the existence of a state price density satisfying an integrability condition, which guarantees a unique solution of class D to the dual BSDE, see {Lemma \ref{lemConvA}}. 
We also show the convergence of the value functions, the optimal consumption streams, {the associated wealth processes, and the indirect utility processes} as perturbations vanish. 

One of the difficulties in the analysis involves establishing stability-type estimates for the solutions to BSDEs with an unbounded terminal condition and non-Lipschitz generator, with respect to particular perturbations of both the terminal condition and the generator. Here, we establishe a ucp convergence result for the family of solutions to such BSDEs, see Lemma \ref{lemck}.  
Further, it is crucial for the proof to show the strict {(and stronger than strict)} concavity of the {value function}, in a sense Lemma \ref{lemConv}. All these estimates are needed to establish the convergence of the optimal consumption streams, whereas the convergence of the value functions relies on conjugacy results from \cite{MX}
and on a particular construction of the nearly optimal consumption streams also combined with localization.

The remainder of this paper is organized as follows: in Section \ref{secModel}, we specify the model; Section \ref{secMainResults} contains the main results. In Section \ref{secInteg}, we discuss the integrability condition on the perturbations, and the proofs are given in Section \ref{secProofs}. 
\section{Model}\label{secModel}
\subsection{Market}

Let $\(\Omega, \(\cF_t\)_{t\in[0,T]}, \cF, \bP\)$ be a complete stochastic basis, where $T\in(0,\infty)$ is the time horizon,  {$\cF_0$ is the completion of the trivial $\sigma$-field}, $\(\cF_t\)_{t\in[0,T]}$ is the augmented filtration generated by a $(k+n)$-dimensional Brownian motion  $B = \(W, W^\perp\)$, where $W$ represents the first $k$ components and $W^\perp$ the remaining $n$ components. For an $\bR^{n\times n}$-valued $\cF^W$-adapted process $\rho$ taking values in $\bR^{n\times k}$ and for an $\bR^{n\times n}$-valued adapted process\footnote{{Through process $\rho$, one, in particular, can include stochastic volatility-type models as in \cite[Section 4]{Kraft17}. We note that the model in \cite{Kraft17} allows for an explicit solution to the Epstein-Zin problem via  Hamilton-Jacobi-Bellman  equations, and under perturbations of the model parameters, the structure allowing for explicit solutions can be lost.}} $\rho^\perp$ satisfying $\rho\rho^\top + \rho^\perp(\rho^\perp)^\top = I_{n\times n}$, the $n$-dimensional identity matrix, we set 
$$W^\rho := \int_0^\cdot \rho_sdW_s  + \int_0^\cdot \rho^\perp_s dW^\perp_s.$$

We consider a family of markets parametrized by $\e\in(-\e_0,\e_0)$ for some $\e_0>0$. Thus, for a fixed $\e$, the traded assets are $(S^{\e, 0}, \dots, S^{\e, n})$, where $S^{\e, 0}$ is the price process of the riskless asset and $(S^{\e, 1},\dots, S^{\e, n})$ are the prices of the risky assets. Their evolution is given by 
\be\label{Se}\bs
dS^{\e, 0}_t = S^{\e, 0}_t r^\e_t dt, \quad 
dS^{\e, i}_t = S^{\e, i}_t\( \(r^\e_t + \mu^{\e, i}_t\)dt + 
\sum\limits_{j = 1}^n \sigma^{\e, i,j}_t dW^{\rho, j}_t\), \\
 i\in\{1, \dots, n\},
\end{split}
\ee
where the processes $r^\e\geq 0$, $\mu^{\e, i}$, and $\sigma^{\e, i}$ are $\cF^W$-adapted processes such that the integrals in \eqref{Se} are well-defined and such that 
$\sigma_t^\e$ is invertible, $t\in[0,T]$, $\Pas$.

In particular, our parametrization of perturbations allows us to include the following cases:

\begin{itemize}
\item Perturbations of the drift $\mu$ only. 
This corresponds to setting $r^\e \equiv r^0$, and $\sigma^{\e, i,j}\equiv \sigma^{0, i,j}$, for every $i, j\in\{1,\dots, n\}$. 
\item Perturbations of the volatility $\sigma$ only. 
\item Similarly, we can consider perturbations of the interest rate only. In many works in mathematical finance, the riskless asset is assumed to be constant-valued. While this gives the correct structure  to many problems of mathematical finance, having non-zero interest rates can also be significant and leads to extra technicalities. 
\item Perturbations of the num\'eraire, where the parametrization of such perturbations can follow the ones in \cite{MostovyiNumeraire}. 
\item Combinations of perturbations above.
\end{itemize}


\subsection{The Epstein-Zin problem}
For every $\e\in(-\e_0, \e_0)$, let \newline $\pi^\e = \(\pi^{\e, 0}, \dots, \pi^{\e, n}\)$ be an $S^\e$-integrable $\bR^{n+1}$-valued process representing the proportions of the total wealth invested in the respective assets, thus,  satisfying {$\sum\limits_{i = 0}^n \pi^{\e, i}_t = 1$, $t\in\[0,T\]$}. 
Let $c^\e$ be a nonnegative {progressively measurable} process representing the consumption rate in the $\e$-th market.  Let $\kappa$ be a deterministic consumption clock given by  $\kappa_t = t + 1_{\{T\}}(t)$, $t\in[0,T]$.  We specify the dynamics of the wealth process $X^{\e, \pi^\e, c^\e}$ associated with consumption-investment pair $(\pi^\e, c^\e)$ and starting from an initial wealth $x$ as follows 
\be\label{Xe}
dX^{\e, \pi^\e, c^\e}_t = X^{\e, \pi^\e, c^\e}_t\sum\limits_{i = 0}^n\pi^{\e, i}_t \frac {dS^{\e, i}_t}{S^{\e, i}_t} - c^\e_t d\kappa_t,\quad X^\e_0 = x.
\ee
We call a consumption process $c^\e$  admissible from $x>0$ for the $\e$-th market, if there exists an $S^\e$-integrable process $\pi^\e$, such that {$\sum\limits_{i = 0}^n \pi^{\e, i}_t = 1$, $t\in\[0,T\]$,} and the associated wealth process in \eqref{Xe} is nonnegative, $\Pas$. We denote the family of {\it admissible} consumptions from  $x>0$ in the $\e$-th market by $\cA(x, \e)$, $\e\in(-\e_0,\e_0)$. 
 
 An agent, starting with an initial capital $x>0$, invests and consumes in the market in a way to maximize his or her expected utility with Epstein-Zin preferences. With $\delta>0$ representing the discount rate, $0<\gamma\neq 1$ being the relative risk aversion, and $0<\psi\neq 1$ specifying the elasticity of inter-temporal substitution (EIS), one can define the 
  Epstein-Zin aggregator $f$ via
\be\label{deff}
f(c,u) = \delta \frac{c^{1 - \frac 1\psi}}{1 - \frac 1\psi}\((1 - \gamma) u\)^{1 - \frac 1 \theta} - \delta \theta u,\quad c>0\quad {\rm and} \quad (1-\gamma)u >0,
\ee
where $\theta := ({1 - \gamma})/\({1-\frac 1 \psi}\)$. Given the bequest utility $U_T(c) =   {c^{1-\gamma}}/\({1-\gamma}\)$, $c>0$, 
the Epstein-Zin utility for a nonnegative consumption stream $c$  is a process $\(U^c_t\)_{t\in[0,T]}$, which satisfies the BSDE
\be\label{defUEZ}
U_t^c = \bE_t\[U_T(c_T) + \int_t^T f(c_s, U^c_s)ds \],\quad t\in[0,T],
\ee
where $\bE_t$ is $\bE\[~\cdot~ |\cF_t\]$. 
{Sufficient conditions for the existence of $U^c$ for a given $c$ are contained in \cite[Proposition 2.1]{MX}.}

The agent aims to maximize his or her Epstein-Zin utility at time zero over all admissible strategies, that is 
\be\label{primalProblemOld}
\sup\limits_{c\in\cA(x,\e)}U^c_0,\quad (x,\e)\in(0,\infty)\times \(-\e_0,\e_0\).
\ee
This formulation, however, does not  guarantee that for a given $c\in\cA(x,\e)$, $U^c$ in \eqref{defUEZ} is well-defined. 
As pointed out in \cite[Remark 2.2]{MX}, one needs some mild integrability properties on the elements of $\cA(x,\e)$, $(x,\e)\in(0,\infty)\times(-\e_0,\e_0)$. 
Below, we provide some insights on this issue. For this, we need to introduce the state price densities. 

 \subsection{{State price} density processes}
 The family of state price density processes is defined as 
 \be\label{defD}\begin{split}
 \cD(y, \e): = &\left\{D>0:\right. D_0 = y, DX^{\e,\pi,c} + \int_0^\cdot D_sc_sd\kappa_s \\
 &\left.~is~a~supermartingale~for~every~ c\in\cA(1, \e)\right\},\\
 &\hspace{35mm} (y,\e)\in(0,\infty)\times \(-\e_0,\e_0\),\\
 \end{split}
 \ee
 where $(\pi, c)$ is the  investment-consumption pair, such that $X^{\e,\pi,c}$ in \eqref{Xe} is nonnegative. 
Thus, one can see that  the family of  minimal state price densities  
\be\label{eqMinStatePriceDensity}
D^{\e, 0}: =  \cE\(-\int_0^\cdot r^\e_sds-\int_0^\cdot \(\(\sigma^{\e}_s\)^{-1} \mu^{\e}_s\)  dW^\rho_s\),\quad \e\in\(-\e_0,\e_0\),
\ee
is well-defined, where $\cE$ denotes the stochastic exponential. 
 In particular, since, for every $\e$, the set of state price densities is non-empty, and this also applies to the set of supermartingale deflators, this precludes the arbitrage opportunities in the sense of unbounded profit with bounded risk (UPBR) introduced in \cite{KarKar07}. In other words, 
\be\label{noArb}
NUPBR~~ ho\tr lds~~ for~~ every ~~\e\in(-\e_0, \e_0). 
\ee

For the BSDE characterizations, as in \cite{Hao17}, it is important to restrict the admissible consumptions to the ones that are also integrable in a sense made precise below. Thus, one can define
\be\label{defAa}{\widetilde\cA}(x,\e) : = \left\{c\in\cA(x,\e) :~\bE\[\int_0^T 
 c_s^{1 - \frac 1\psi}ds\]<\infty \right\}.
\ee
Formally, in \cite{Hao17}, also $\bE\[c^{1-\gamma}_T\]<\infty$ is imposed. However, 
for every constant $\delta>0$, a consumption plan satisfying $c_T\geq \delta$ satisfies $\bE\[c^{1-\gamma}_T\]< \infty$. 
In particular, the plans such that $\bE\[c^{1-\gamma}_T\]= \infty$  correspond to small values of $c_T$, and thus are suboptimal. By setting the associate $U^c\equiv-\infty$ for every $c$ such that $\bE\[c^{1-\gamma}_T\]= \infty$, one can rule them out. If all consumption plans allow for $\bE\[c^{1-\gamma}_T\]= \infty$, then intuitively, the problem is degenerate. This, however, does not happen if the interest rate $r^0\geq 0$, in which case  constant-valued consumptions are admissible and integrable in the sense above. 


Having ruled out the possibility of $\bE\[c^{1-\gamma}_T\]= \infty$ for all consumption plans, as in the paragraph above, one can provide a sufficient condition for  $\bE\[\int_0^T 
c_s^{1 - \frac 1\psi}ds\]<\infty$ to hold for every $c\in\cA(x,\e)$. It is related to a characterization via the {reverse H\"older inequality in the spirit of  \cite[Proposition 4.5]{MarcelOp} and \cite{Kazamaki}}. 

{The following lemma provides a sufficient condition for $\cA(x,\e) = {\widetilde\cA}(x,\e)$.}
\begin{Lemma}\label{lemConvA}
{Let $\e\in\(-\e_0, \e_0\)$ be fixed and suppose that there exists $D\in\cD(1, \e)$, such that 
\be\label{RevHold}
\bE\[ \int_0^T D^{1 - \psi}_s ds \]<\infty.
\ee
Then, we have  
\be\label{412}
\cA(x,\e) = {\widetilde\cA}(x,\e),\quad x>0.
\ee}
\end{Lemma}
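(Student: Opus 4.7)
The inclusion $\widetilde\cA(x,\e)\subseteq\cA(x,\e)$ is immediate from the definition in \eqref{defAa}, so the plan is to prove the reverse inclusion: every admissible consumption stream is automatically integrable in the sense that $\bE\bigl[\int_0^T c_s^{1-1/\psi}\,ds\bigr]<\infty$. The main tool will be the budget constraint coming from the supermartingale property in \eqref{defD}, combined with a pointwise H\"older factorization that separates $c$ from the state price density $D$, and \eqref{RevHold} to control the $D$-part.

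First I would fix $c\in\cA(x,\e)$ and reduce to unit wealth: by the linearity of \eqref{Xe}, if $\pi$ is an admissible strategy financing $c$ from $x$, then the same $\pi$ finances $c/x$ from $1$, so $c/x\in\cA(1,\e)$. Applying the supermartingale property in \eqref{defD} to the $D$ given by the hypothesis and using $D_T X^{\e,\pi,c/x}_T\ge 0$, I would take expectations of $M_T := D_T X^{\e,\pi,c/x}_T + \int_0^T D_s (c_s/x)\,d\kappa_s \le M_0 = 1$, multiply through by $x$, and conclude the budget inequality
\[
\bE\!\left[\int_0^T D_s c_s\,ds\right]\le \bE\!\left[\int_0^T D_s c_s\,d\kappa_s\right]\le x,
\]
where the first inequality holds because $d\kappa_s=ds+\delta_T(ds)$ and the integrand is nonnegative.

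Next I would carry out the key H\"older factorization. Since we are in the regime $\psi>1$, one has $1-\tfrac1\psi\in(0,1)$, and I would write
\[
c_s^{\,1-\frac1\psi}=\bigl(D_s c_s\bigr)^{1-\frac1\psi}\,D_s^{-(1-\frac1\psi)},
\]
then apply H\"older's inequality in $s$ with conjugate exponents $p=\psi/(\psi-1)$ and $q=\psi$. A short arithmetic check shows $(1-\tfrac1\psi)p=1$ and $-(1-\tfrac1\psi)q=1-\psi$, giving
\[
\int_0^T c_s^{\,1-\frac1\psi}\,ds\le \left(\int_0^T D_s c_s\,ds\right)^{\!1-\frac1\psi}\left(\int_0^T D_s^{\,1-\psi}\,ds\right)^{\!1/\psi}.
\]
Taking expectations and applying H\"older's inequality again on the probability space with the same exponents $p,q$, I would combine the budget inequality with the integrability assumption \eqref{RevHold} to obtain
\[
\bE\!\left[\int_0^T c_s^{\,1-\frac1\psi}\,ds\right]\le x^{\,1-\frac1\psi}\left(\bE\!\left[\int_0^T D_s^{\,1-\psi}\,ds\right]\right)^{\!1/\psi}<\infty,
\]
which delivers $c\in\widetilde\cA(x,\e)$ and finishes the argument.

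There is no substantial obstacle here once the right exponents are identified; the only thing to keep careful track of is the scaling from $\cA(x,\e)$ to $\cA(1,\e)$ so that the supermartingale in \eqref{defD} can be invoked, and the fact that the consumption clock $\kappa$ includes a mass at $T$, so that the pure $ds$-integral is dominated by the $d\kappa$-integral used in the budget constraint.
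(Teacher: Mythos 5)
Your proof is correct and takes essentially the same route as the paper: establish the budget constraint $\bE[\int_0^T D_s c_s\,d\kappa_s]\le x$ from the supermartingale property in \eqref{defD}, then use a H\"older factorization of $c_s^{1-1/\psi}$ into a $D_s c_s$-part and a $D_s^{1-\psi}$-part, controlled by the budget constraint and \eqref{RevHold} respectively. The only cosmetic differences are that you make the scaling from $\cA(x,\e)$ to $\cA(1,\e)$ explicit and apply H\"older in two steps (first $ds$, then $d\bP$), whereas the paper applies it once on the product measure $ds\times d\bP$ and leaves a generic constant $C$ in place of your explicit $x^{1-1/\psi}$.
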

%
\begin{proof}
{Let us fix $\e\in(-\e_0, \e_0)$. 
Then, for every $D\in\cD(1,\e)$, along the lines of  \cite[Proposition 4.2]{Mostovyi2015}, one can show that  
\be\label{414}
\bE\[\int_0^T D_s c_s ds\]\leq 1,\quad for~every~c\in\cA(1,\e).
\ee
Next, let us consider $D\in\cD(1,\e)$, satisfying \eqref{RevHold}. 
Then, for an arbitrary $c\in\cA(1,\e)$, using H\"older's inequality, we get
\be\nn\bs
\bE\[ \int_0^T c^{1-\tfrac 1\psi}_sds\] &= \bE\[ \int_0^T c^{1-\tfrac 1\psi}_sD^{1-\tfrac 1\psi}_sD^{\tfrac 1\psi -1}_sds\]  \\
&\leq C\bE\[ \int_0^T D_sc_sds\]^{{1 - \tfrac 1\psi}}\bE\[\int_0^T D^{1 - \psi}_sds\]^{\frac 1\psi}<\infty,
\end{split}\ee 
for some constant $C\in(0,\infty)$, 
where the last inequality follows from \eqref{RevHold} and \eqref{414}. Therefore, $\bE\[\int_0^T e^{-\delta s } c_s^{1 - \frac 1\psi}ds\]<\infty$, and we conclude that $c\in\widetilde\cA(1,\e)$. 
 }
%

\end{proof}

As pointed out in \cite[Remark 2.2]{MX}, instead of verifying the integrability conditions in \eqref{defAa}, it is enough to check for the optimal consumption stream, $c^*$, that the associated $U^{c^*}$ exists and is of class (D).  A similar argument can be provided for the dual problem below.
 
 With the integrability conditions in \eqref{defAa}, one can restate \eqref{primalProblemOld} as 
 \be\label{primalProblem}
 u(x,\e) = \sup\limits_{c\in{\widetilde\cA}(x,\e)}U^c_0,\quad (x,\e)\in(0,\infty)\times \(-\e_0,\e_0\).
 \ee
We call $u$ - the value function and $U^{\widehat c(x,\e)}$ - the value process if $\widehat c(x,\e)$ is an optimizer in \eqref{primalProblem} for a given pair $(x,\e)\in(0,\infty)\times \(-\e_0,\e_0\)$, provided that such an  optimizer exists. 
Next, following \cite{MX}, let us define
\be\label{defg}
g(d, v) := \delta^\psi \frac{d^{1-\psi}}{\psi - 1}\((1-\gamma)v\)^{1 - \frac{\gamma\psi}{\theta}}-\delta \theta v,\quad d>0,\quad (1-\gamma) v>0,
\ee
and a function $V_T$, the convex conjugate of $U_T$, which is given by 
\be\label{defVT}
V_T(d):= \frac \gamma{1 - \gamma}d^{\frac{\gamma - 1}{\gamma}},\quad d>0.
\ee
Next, for a given pair $(y,\e)\in(0,\infty)\times \(-\e_0,\e_0\)$ and $D\in\cD(y, \e)$, one defines the Epstein-Zin {\it stochastic differential dual} for $D$ to be a process $V^{D}$ satisfying the BSDE
\be\label{defVD}
V_t^{D}  = \bE_t\[V_T(D_T) + \int_t^T g\(D_s, \frac 1\gamma V^{D}_s\)ds \],\quad t\in[0,T].
\ee
Sufficient conditions for the existence of $V^D$ are presented in \cite[Proposition 2.5]{MX}.
We state the family of the dual minimization problems as
\be\label{dualProblemOld}
\inf\limits_{D\in\cD(y, \e)}V_0^D,\quad (y,\e)\in(0,\infty)\times \(-\e_0,\e_0\).
\ee
Similarly to \eqref{primalProblem},
to ensure that for a given state price density $D$, $V^D$ is well-defined, one needs some integrability conditions, and following \cite[Proposition 2.5]{MX}, one can set
\be\label{defDa}
{\widetilde\cD}(y,\e) := \left\{D\in\cD(y,\e):~\bE\[\int_0^T D_s^{1-\psi}ds\]\right\}<\infty.
\ee
Technically in \cite[Proposition 2.5]{MX}, it is also required that $\bE\[D_T^{\frac{\gamma -1}\gamma} \]<\infty$, which however holds in our settings  for every state price density, by an application of Holder's inequality, as ${\frac{\gamma -1}\gamma}\in(0,1)$, and since $D$ (under nonnegative interests rates) is a supermartingale. 
This allows us to restate \eqref{dualProblemOld} as
\be\label{dualProblem}
v(y,\e):=\inf\limits_{D\in{\widetilde\cD}(y, \e)}V_0^D,\quad (y,\e)\in(0,\infty)\times \(-\e_0,\e_0\).
\ee
We conclude this section by pointing out that, by \eqref{RevHold}, if ${\widetilde\cD}(1,\e)\neq \emptyset$, then $\cA(1,\e) = {\widetilde\cA}(1,\e)$, and thus, the convexity of ${\widetilde\cA}(1,\e)$ holds. {Thus, for every $\e\in(-\e_0, \e_0)$, the non-emptiness of the dual feasible set implies the convexity of the primal domain. }
\section{Main results}\label{secMainResults}
\subsection{Model assumptions}{
We will need two assumptions. 
To ensure that the dual problem \eqref{dualProblem} is non-degenerate in a neighborhood of $\e = 0$, we impose the following assumption. 
\begin{Assumption}\label{asPert}
For every $\e\in(-\e_0, \e_0)$, $\widetilde \cD(1, \e)\neq \emptyset$. 
\end{Assumption}
}

The second assumption allows for the additional structure for the base model corresponding to $\e=0$.
\begin{Assumption}\label{asBaseModel}
Let $x>0$ be fixed and suppose that, for $\e=0$, a conjugacy relation in the following sense holds
:
\be\label{conjugacy}
u(x,0) = \min\limits_{\widetilde y> 0}\( v(\widetilde y, 0) + x\widetilde y\) = v( y, 0) + x y,
\ee
for some $y>0$. Further, assume that, for $\e=0$, there exist optimizers $\widehat c(x, 0)$ to \eqref{primalProblem} and $\widehat D(y,0)$ to \eqref{dualProblem}, such that $U^{\widehat c(x, 0)}$ and $V^{\widehat D(y,0)}$ are of class D.
\end{Assumption}
{\subsection*{Sufficient conditions for Assumption \ref{asBaseModel}}
Sufficient conditions for \eqref{conjugacy} are contained in \cite[Section 3]{MX}. Explicit solutions are contained \cite[Theorem 2.14]{Hao17},  
 see also \cite{Kraft17}, where optimal strategies are obtained in Markovian settings. To be more precise, \cite{Kraft17} contains the explicit solution for the primal problem \eqref{primalProblem}, and the optimal state price density could be identified via the utility gradient approach, following e.g., \cite{DuffieSkiadas94}. 
\begin{Proposition}\cite[Theorem 3.6]{MX}
Suppose that $\gamma\psi\geq 1$, $\psi>1$, or $\gamma\psi\leq 1$, $\psi<1$ and the processes $r^0$, $(\mu^0)^\top 
\(\(\sigma^0\)^{\top}\)^{-1}\(\sigma^0\)^{-1}
\mu^0$ are bounded. Then \eqref{conjugacy} holds.
\end{Proposition}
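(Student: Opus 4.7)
The plan is to deduce the conjugacy relation \eqref{conjugacy} by combining weak duality (an inequality $u(x,0) \leq \inf_{\tilde y>0}(v(\tilde y,0) + x\tilde y)$ valid under essentially no structural assumption) with an attainment step that forces equality under the stated boundedness and parameter conditions. Throughout I work with the base model $\e=0$ and drop the $\e$.

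First I would establish weak duality by a BSDE comparison argument. Fix $c\in\widetilde\cA(x)$ and $D\in\widetilde\cD(y)$. The pointwise Fenchel inequality $U_T(c_T)\leq V_T(D_T)+c_TD_T$ handles the terminal condition. For the drivers, the key identity (the Fenchel conjugate of $f$ in its $c$-variable, with $u$ frozen) is
\begin{equation*}
f(c,u) \leq g(d,v) + cd + \bigl(\text{terms linear in }u\text{ and }v\text{ that match}\bigr),
\end{equation*}
where the linear parts are chosen so that, on substituting $u=U^c_s$ and $v=\tfrac1\gamma V^{D}_s$, a linear BSDE comparison yields
\begin{equation*}
U^c_0 \leq V^D_0 + \mathbb{E}\!\left[\int_0^T D_s c_s\,d\kappa_s + D_T c_T\right] \leq V^D_0 + xy,
\end{equation*}
where the last step uses the supermartingale property built into the definition \eqref{defD} and the normalization $D_0=y$. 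Taking sup over $c$ and inf over $D$ gives $u(x,0)\leq \inf_{\tilde y>0}(v(\tilde y,0)+x\tilde y)$.

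Next I would show that the dual problem is nondegenerate and the infimum is attained. Under the stated boundedness, the minimal density \eqref{eqMinStatePriceDensity} satisfies $\mathbb{E}\bigl[\int_0^T (D^{0,0}_s)^{1-\psi}ds\bigr]<\infty$ (a Novikov-type estimate: the exponent $1-\psi$ combined with the bounded market price of risk gives a bounded exponential martingale factor, and bounded $r^0$ controls the drift term). Hence $\widetilde\cD(1)\neq\emptyset$, so Lemma \ref{lemConvA} applies, $\widetilde\cA(x)$ is convex, and $v(y,0)$ is finite. The parameter restriction $\gamma\psi\geq 1, \psi>1$ (or $\gamma\psi\leq 1, \psi<1$) controls the sign of $\theta$ and $1-\tfrac{\gamma\psi}{\theta}$ so that $V_T$ is strictly convex and $g$ is convex in $d$ with the correct curvature in $v$. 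A standard Komlós/weak-compactness argument over the convex set $\widetilde\cD(y)$ then produces a dual minimizer $\widehat D(y,0)$; class-D of $V^{\widehat D}$ follows from uniform integrability estimates driven by the integrability in \eqref{defDa}.

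Finally I would upgrade weak duality to equality and identify the optimizer. Starting from the primal optimizer $\widehat c(x,0)$ (whose existence in the present range of parameters is obtained by a parallel concavity/compactness argument on $\widetilde\cA(x)$), the first-order condition formally gives $\widehat D = \nabla U^{\widehat c}$ via the utility-gradient approach of \cite{DuffieSkiadas94}; equivalently, the pointwise Fenchel inequalities $U_T(\widehat c_T)=V_T(\widehat D_T)+\widehat c_T\widehat D_T$ and the analogous identity at the driver level hold when $\widehat D$ is the multiplier satisfying $\widehat D_s = f_c(\widehat c_s,U^{\widehat c}_s)\cdot(\text{exponential factor})$. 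Substituting into the weak-duality chain forces equality throughout, yielding $u(x,0)=V^{\widehat D}_0+xy=v(y,0)+xy$ for the specific $y:=\widehat D_0$, which is \eqref{conjugacy}.

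The main obstacle I anticipate is the BSDE comparison step for weak duality, because the EZ driver $f$ is neither Lipschitz nor monotone in $u$ in a uniform way, and $U^c$, $V^D$ are only of class D rather than bounded. A robust way around this is to run the comparison on a localizing sequence of stopping times where both $U^c$ and $V^D$ are bounded, exploit the explicit multiplicative structure of $f$ and $g$ to obtain a linear Gronwall inequality for $U^c - V^D$ with integrable coefficients, and then pass to the limit using uniform integrability from the class-D property. The sign conditions on $\theta$ from the parameter assumptions are precisely what make the linearization of $f$ and $g$ point in the direction required for comparison.
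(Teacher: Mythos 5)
The statement is cited verbatim from \cite[Theorem 3.6]{MX}; the present paper offers no proof of its own, so there is nothing internal to compare your attempt against. Your reconstruction does track the broad strategy of \cite{MX}: weak duality via a Fenchel-type inequality between $f$ and $g$ plus a BSDE comparison, dual nondegeneracy under a bounded market price of risk, and identification of the dual optimizer from the primal one via the utility-gradient approach of Duffie--Skiadas. So conceptually you are pointed in the right direction.

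That said, the places you wave at are exactly the substance of \cite[Theorem~3.6]{MX}, and the sketch as written would not close them. First, the driver-level inequality $f(c,u)\leq g(d,v)+cd+(\text{linear part})$ is not a formality: the linear terms have to be exhibited explicitly and the inequality checked pointwise on the domain $(1-\gamma)u>0$, $(1-\gamma)v>0$ under the stated sign conditions on $\theta$ and $1-\gamma\psi/\theta$, which is a concrete convexity computation rather than something that "matches" by construction. Second, the BSDE comparison for class-D (not bounded) solutions is the genuinely hard step; after the power transformation the driver acquires a quadratic-in-$Z$ term divided by the state variable, so the obstruction is not merely lack of Lipschitz continuity, and a bare "localize and use uniform integrability" does not by itself yield a Gronwall inequality with integrable coefficients. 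Third, in \cite{MX} the dual minimizer is not produced by an abstract Komlós/weak-compactness argument on $\widetilde\cD$; it is constructed explicitly from the primal optimizer via the first-order condition, so "attainment" and "identification" are a single step, not two, and conflating them obscures where the class-D property of $V^{\widehat D}$ actually comes from. Your outline is directionally consistent with the cited theorem, but the technical content that makes it a theorem rather than a heuristic is precisely what is elided.
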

For models with unbounded market price of risk, we refer to \cite[Section 3.4]{MX} for the exact conditions that guarantee Assumption \ref{asBaseModel}. In a Markov setting, we refer to \cite[Theorem 5.1]{Kraft17}, where, in the one-dimensional stock prices process and a factor model for the dynamics of both riskless and risky assets, boundedness of $\mu^0$ and $r^0$ as well as boundedness away from $0$ and $\infty$ of $\sigma^0$, guarantee that  \cite[Theorem 3.6]{MX} applies.}  

To analyze the behavior of the primal and dual problems under perturbations, we introduce a family of     $\bR^{n}$-dimensional processes $\lambda^\e$, defined by  
\be\label{deflambda}
\lambda^\e_t :=
  \((\sigma^0_t)^\top\)^{-1}\(\(\sigma^\e_t\)^{-1} \mu^\e_t - \(\sigma^0_t\)^{-1} \mu^0_t\),
\quad t\in[0,T],\quad \e\in(-\e_0,\e_0).
\ee
We also set $R:=(R^1,\dots,R^n)$, where 
\be\label{defRe}
\quad dR^{i}_t = \mu^{0,i} dt+ \sum\limits_{j = 1}^n\sigma_t^{0, i, j}dW^{\rho, j}_t,\quad R^{ i}_0 = 0, \quad i\in\{1, \dots, n\},
\ee
Along the lines of \cite{MostovyiNumeraire}, let us introduce the family of processes $N^\e$, $\e\in(-\e_0,\e_0)$, given via
\be\label{defNe}
 d{N^\e}_t = N^\e_t\((r^0_t - r^\e_t) dt - \lambda^\e_t {dR_t}\), \quad t\in[0,T], \quad N^\e_0 = 1,\quad \e\in(-\e_0,\e_0).
 \ee
 
 We recall that $\kappa$ is given by  $\kappa_t = t + 1_{\{T\}}(t)$, $t\in[0,T]$.  Let  $\bL^0(d\kappa\times\bP)$ 
 be  the linear space of (equivalence classes of) real-valued
measurable processes on the stochastic basis $(\Omega, \cF, (\cF_t)_{t\in[0,T]}, \bP)$ equipped with the topology of convergence in measure $(d\kappa\times\bP)$. 
%

\subsection{Stability theorems}
The first theorem establishes convergence of the value functions. 
\begin{Theorem}\label{mainThm1}
Let $x>0$ be fixed, {$\gamma> 1$ and $\psi>1$} in \eqref{deff}. Let us further suppose that  {Assumptions \ref{asPert} and \ref{asBaseModel}} hold and 
for every $\e\in(-\e_0, \e_0)$, $\sigma^\e$ is invertible, {$\lambda^\e$ appearing in \eqref{deflambda}  is $R$-integrable} 
 and 
$\lim\limits_{\e\to 0} N^\e = 1,~ {in ~measure~(d\kappa\times \bP)}.$

Then for every $\e\in\(-\e_0,\e_0\)$, we have 
\begin{enumerate}[(i)]\item
 the value functions are finite-valued, that is 
\be\label{finValueuv}
u(z,\e)\in\bR\quad and\quad v(z, \e)\in\bR,\quad (z,\e)\in(0,\infty) \times\(-\e_0,\e_0\);
\ee
\item 
the value functions converge 
\be\label{convergenceu}\bs
\lim\limits_{(x',\e)\to (x,0)}u(x',\e) = u(x,0),\quad x >0,
\end{split}
\ee
\be\label{convergencev}\bs
\lim\limits_{(y',\e)\to (y,0)}v(y',\e) = v(y,0),\quad y >0;
\end{split}
\ee
\item for every $(x,\e)\in(0,\infty)\times(-\e_0,\e_0)$,
 there exists a unique optimizer to \eqref{primalProblem}.
\end{enumerate}
\item  
\end{Theorem}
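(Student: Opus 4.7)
The plan is to establish the three claims in the order (i) finiteness, (iii) existence and uniqueness, and then (ii) the convergence, using the local tools announced in the introduction (Lemmas \ref{lemConvA}, \ref{lemConv}, and \ref{lemck}), the conjugacy at $\varepsilon=0$ supplied by Assumption \ref{asBaseModel}, and the numeraire-like transformation encoded by $N^\varepsilon$.

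\textbf{Part (i).} Fix $(y,\varepsilon)$ and use Assumption \ref{asPert} to pick $D \in \widetilde{\cD}(1,\varepsilon)$; then $yD \in \widetilde{\cD}(y,\varepsilon)$, and \cite[Proposition 2.5]{MX} produces a class-(D) $V^{yD}$ finite at $t=0$, whence $v(y,\varepsilon) \leq V^{yD}_0 < \infty$. Lemma \ref{lemConvA} identifies $\cA(x,\varepsilon)$ with $\widetilde{\cA}(x,\varepsilon)$, and a weak duality inequality of the form $U^c_0 \leq V^D_0 + xy$ for $c \in \widetilde{\cA}(x,\varepsilon)$ and $D \in \widetilde{\cD}(y,\varepsilon)$ (obtainable along the lines of \cite{MX}) supplies an upper bound on $u(x,\varepsilon)$ and a lower bound on $v(y,\varepsilon)$. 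A lower bound on $u(x,\varepsilon)$ is obtained by displaying a concrete admissible stream, e.g.\ a small positive constant consumption (admissible since $r^\varepsilon\geq 0$) whose $U_0$ is finite by \cite[Proposition 2.1]{MX}.

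\textbf{Part (iii).} Convexity of $\widetilde{\cA}(x,\varepsilon) = \cA(x,\varepsilon)$ from Lemma \ref{lemConvA}, combined with the (strong) strict concavity of $c \mapsto U^c_0$ furnished by Lemma \ref{lemConv}, immediately delivers uniqueness, so only existence remains. I would take a maximizing sequence $(c_n)\subset\widetilde{\cA}(x,\varepsilon)$, use the supermartingale inequality $\bE\!\left[\int_0^T D_s(c_n)_s\,d\kappa_s\right]\leq x$ available for any $D \in \widetilde{\cD}(1,\varepsilon)$ to uniformly control $(c_n)$, extract via a Koml\'os-type argument forward convex combinations converging $d\kappa\times\bP$-a.s.\ to some $\widehat{c}$, verify admissibility through the Fatou-closedness properties of $\cA(x,\varepsilon)$ (which rely on NUPBR), and conclude by the upper semicontinuity of $c \mapsto U^c_0$ supplied by the BSDE stability of Lemma \ref{lemck}.

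\textbf{Part (ii).} The engine is the identity that, starting from any $D \in \widetilde{\cD}(y,0)$, produces $D^\varepsilon := DN^\varepsilon \in \cD(y,\varepsilon)$ (the process $N^\varepsilon$ is engineered precisely for this purpose via \eqref{defNe}), and symmetrically maps any admissible base-market consumption into an $\varepsilon$-market consumption after multiplication by $1/N^\varepsilon$ and a suitable localization preserving the integrabilities in \eqref{defAa} and \eqref{defDa}. Because $N^\varepsilon \to 1$ in $d\kappa\times\bP$-measure, these transformed data converge, and Lemma \ref{lemck} transfers the convergence to the level of BSDE solutions: applied to \eqref{defVD} it yields $V^{D^\varepsilon}_0 \to V^D_0$, hence $\limsup_{\varepsilon\to 0}v(y,\varepsilon)\leq v(y,0)$, and coupling this with Assumption \ref{asBaseModel} and weak duality gives $\limsup u(x',\varepsilon)\leq u(x,0)$. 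Taking the primal optimizer $\widehat{c}(x,0)$ from Assumption \ref{asBaseModel}, building its $\varepsilon$-version via a $1/N^\varepsilon$-scaling together with localization, and applying Lemma \ref{lemck} to \eqref{defUEZ} delivers the reverse inequality $\liminf u(x',\varepsilon)\geq u(x,0)$ and hence \eqref{convergenceu}; \eqref{convergencev} is obtained analogously on the dual side starting from $\widehat{D}(y,0)$.

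The hard part will be the BSDE stability invoked in part (ii): the generators $f$ and $g$ are neither Lipschitz nor of monotone growth in the regime $\gamma,\psi>1$, the terminal data $U_T$ and $V_T$ are unbounded, and the perturbation enters jointly through the terminal condition and, via the transformed consumption or deflator, through the driver. Verifying the hypotheses of Lemmas \ref{lemConv} and \ref{lemck} after the $N^\varepsilon$-transformation---in particular obtaining uniform integrability on a neighbourhood of $\varepsilon=0$ sufficient to pass the limit inside the BSDEs, while simultaneously keeping the transformed candidates inside $\widetilde{\cA}(x,\varepsilon)$ and $\widetilde{\cD}(y,\varepsilon)$ via the localization---is where the bulk of the technical work will be concentrated.
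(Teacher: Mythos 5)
Your proposal tracks the paper's actual argument in its essentials: part (i) follows from exhibiting a bounded admissible consumption and a dual element with BSDE comparison (the paper's Lemma \ref{lemFinValue}); part (iii) uses the strong concavity from Lemma \ref{lemConv} together with convexity and $\bL^0$-closedness of $\widetilde\cA(x,\e)$ (the paper compresses this to one sentence; your Koml\'os-based maximizing-sequence argument is a fine unpacking of it); and part (ii) uses the $N^\e$-transplantation of optimizers in both directions, a $\liminf$ inequality for $u$ and a $\limsup$ inequality for $v$, chained through weak duality (\cite[Theorem 2.7]{MX}) and the conjugacy in Assumption \ref{asBaseModel}, exactly as in the paper.

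One concrete scope issue in part (ii): you invoke Lemma \ref{lemck} to supply both the primal $\liminf$ inequality and the dual $\limsup$ inequality, but Lemma \ref{lemck} is stated under the hypotheses of Theorem \ref{mainThm2}, which are strictly stronger than those of Theorem \ref{mainThm1} (the $\esssup$/$\essinf$ conditions), and its statement only concerns the primal BSDE \eqref{BSDEY}, not the dual one \eqref{defVD}. Using it directly would therefore be formally circular (Theorem \ref{mainThm1} must be proved before Theorem \ref{mainThm2} can even be stated) and would still leave the dual side unaddressed. The paper instead introduces two dedicated lemmas, Lemma \ref{lemusc} and Lemma \ref{lemvsc}, which perform the same truncation--plus--Lipschitz-BSDE-stability argument under only the hypotheses of Theorem \ref{mainThm1}, giving $\liminf u(x',\e)\geq u(x,0)$ and $\limsup v(y',\e)\leq v(y,0)$ at time zero (which is all that is needed here), with Lemma \ref{lemck} reserved for the ucp-strength convergence required later in Theorem \ref{mainThm2}. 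If you replace your citation of Lemma \ref{lemck} with the analogous but self-contained truncation argument at $t=0$ (or simply cite Lemmas \ref{lemusc}/\ref{lemvsc}), the proposal is fully aligned with the paper.
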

\begin{Remark}\label{remFinValue}
For the problem in \eqref{primalProblem},    a condition of the finiteness of the value functions condition is typically imposed. In the present settings, as we deal with non-positive value functions $u(x, \e)$ finiteness from above (by zero) holds. For the finiteness from below,
\be\label{finValue}
 u(x, \e)>-\infty,\quad (x,\e)\in(0,\infty)\times(-\e_0,\e_0),
\ee
{we remark that this also holds as $r^0\geq 0$, and thus $c\equiv \frac x{T+1}$ is an admissible consumption for the initial wealth $x$, for which one can use comparison results for BSDEs to show that the value function is finite-valued. Similar arguments can be employed to show the finiteness of $v(y,\e)$, as it is also bounded by zero from above, and by $\(u(1, \e) - y\)$ from below. }
\end{Remark}


 The next theorem addresses the convergence of the optimizers. The assumptions of Theorem \ref{mainThm1}, ensure that for every $(x,\e)\in(0,\infty)\times (-\e_0, \e_0)$, there exists a unique $\widehat c(x,\e)$, such that $u(x,\e) = U_0^{\widehat c(x,\tr \e)}$, and that $u(x,\e)$ is finite-valued for every such $(x,\e)$. To prove convergence of the optimizers, we need to ensure finiteness for the value processes in the sense below.

\begin{Theorem}\label{mainThm2}
Let $x>0$ be fixed. Let the conditions of Theorem \ref{mainThm1}  
 hold and suppose that there exists $\e'>0$, such that 
\be\nn\label{asFin}\esssup\limits_{(x,\e)\in B_{\e'}(0,0)}\widehat c(x,\e) \in\bL^0(d\kappa\times\bP),~~
\essinf\limits_{(x,\e)\in B_{\e'}(0,0)}U^{\widehat c(x,\e)} \in\bL^0(d\kappa\times\bP),
\ee
where $B_{\e'}(0,0)$ is a Euclidean ball of radius $\e'$ in $\bR^2$. 
 
We then have that 
\be\label{convOpt}\bs
\lim\limits_{(x',\e)\to (x,0)} \widehat c(x',\e)= \widehat c(x,0),
\end{split}
\ee
where the convergence is in measure $(d\kappa\times\bP)$. 
\end{Theorem}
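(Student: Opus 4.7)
The plan is to combine three ingredients already available: (i) the convergence of value functions from Theorem \ref{mainThm1}, (ii) the quantitative strict (in fact, stronger than strict) concavity of the indirect utility functional established in Lemma \ref{lemConv}, and (iii) the ucp stability for the Epstein--Zin BSDE from Lemma \ref{lemck}. The finiteness assumptions of the present theorem furnish the uniform $\bL^0(d\kappa\times\bP)$-envelopes required to keep all the relevant random objects under control during limiting procedures, and the uniqueness of the base optimizer $\widehat c(x,0)$ supplied by Theorem \ref{mainThm1} reduces matters to proving convergence along arbitrary subsequences.

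First, fix a sequence $(x_n,\e_n)\to(x,0)$, write $c_n := \widehat c(x_n,\e_n)$ and $c_\star := \widehat c(x,0)$, and transport each $c_n$ from the $\e_n$-market back into the base market. Concretely, exploiting the process $N^{\e_n}$ that relates the minimal state price densities $D^{\e_n,0}$ and $D^{0,0}$, one constructs a sequence $\widetilde c_n \in \widetilde\cA(x,0)$ whose indirect utility $U^{\widetilde c_n}_0$ differs from $U^{c_n}_0 = u(x_n,\e_n)$ by an error that vanishes, by Lemma \ref{lemck} applied to the BSDE \eqref{defUEZ} under the perturbation $N^{\e_n}\to 1$ in measure $(d\kappa\times\bP)$. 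Combined with $u(x_n,\e_n)\to u(x,0)$ from Theorem \ref{mainThm1}, this shows that $\widetilde c_n$ is a maximizing sequence in $\widetilde\cA(x,0)$.

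Second, the strong concavity of Lemma \ref{lemConv}, applied at the midpoint $(\widetilde c_n + c_\star)/2 \in \widetilde\cA(x,0)$, yields a quantitative inequality of the schematic form
\begin{equation*}
\tfrac12\bigl(U_0^{\widetilde c_n}+U_0^{c_\star}\bigr) + \eta_n \;\le\; U_0^{(\widetilde c_n+c_\star)/2} \;\le\; u(x,0),
\end{equation*}
where the gap $\eta_n$ is a non-negative functional of $\widetilde c_n - c_\star$ strong enough to dominate convergence in measure $(d\kappa\times\bP)$. Since both $U_0^{\widetilde c_n}$ and $U_0^{c_\star}$ tend to $u(x,0)$, we get $\eta_n \to 0$, forcing $\widetilde c_n \to c_\star$ in measure $(d\kappa\times\bP)$. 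Undoing the transport via $N^{\e_n} \to 1$ then delivers $c_n \to c_\star$ in measure $(d\kappa\times\bP)$, which is the desired conclusion.

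I expect the main obstacle to lie in the transport step. The admissibility set $\widetilde\cA(x,\e)$ depends on $\e$ through both the wealth dynamics \eqref{Xe} and the integrability condition in \eqref{defAa}, so producing a candidate $\widetilde c_n \in \widetilde\cA(x,0)$ whose indirect utility is genuinely close to $U_0^{c_n}$ requires a careful localization: one truncates $c_n$ on events of small $(d\kappa\times\bP)$-mass where $N^{\e_n}$ is pathwise far from $1$ or where the $\bL^0$-envelopes from the essential-supremum/infimum hypothesis are stressed, and repairs admissibility by grafting in a multiple of $c_\star$. Coordinating this truncation with both the BSDE stability of Lemma \ref{lemck} and the strong-concavity estimate of Lemma \ref{lemConv}---so that a single estimate actually converts value-function convergence into convergence in measure of the optimal consumptions---is the delicate technical point.
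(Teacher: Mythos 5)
Your overall strategy---value-function convergence plus a quantitative concavity gap at midpoints---is the same as the paper's, but the logic of your second step has a genuine gap. You argue: the midpoint inequality gives $\eta_n\to 0$, and this ``forces'' $\widetilde c_n\to c_\star$ in measure. That implication does not follow from Lemma \ref{lemConv} as stated. The lemma produces, for each \emph{fixed} pair $(c',c'')$ satisfying \eqref{6171}, some $\bar\eta>0$; it does not provide a modulus of strong concavity that is uniform over a sequence of pairs, because the gap is $\eta_0=\bE[(\Gamma\xi)\cdot\kappa_T]$ and both $\Gamma$ and $\xi$ depend on the pair through the associated BSDE solutions. The correct organization is the contrapositive: assume non-convergence, extract a subsequence along which the two consumptions stay $\delta$-separated on a set of $(d\kappa\times\bP)$-measure $\ge\delta$ (and are bounded by $1/\delta$ there), and then prove $\liminf_n\eta^n>0$. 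That last step is the technical heart of the proof and is where the theorem's $\esssup/\essinf$ hypotheses actually enter: they give the lower bound $\liminf_n\Gamma^n=\widetilde\Gamma^\infty>0$ and control $\Delta Y^n$ from below, after which a Dunford--Pettis extraction of a weak $\bL^1$-limit of $\xi^n\wedge 1$ with $(d\kappa\times\bP)[\xi^\infty>0]>0$ and Fatou's lemma yield $\liminf_n\eta^n>0$. Your proposal does not contain this argument, and without it the concavity gap cannot be converted into convergence in measure.

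A second, more repairable issue is the direction of transport. You move the perturbed optimizers $c_n=\widehat c(x_n,\e_n)$ back into the base market and need $U_0^{\widetilde c_n}$ close to $u(x_n,\e_n)$; but Lemma \ref{lemck} is a stability result for the BSDE driven by the \emph{fixed} consumption $\widehat c(x,0)$ perturbed by $1/N^{\e_k}$ (after truncation), and it does not apply to a varying sequence $c_n$ that is only controlled through the $\bL^0$-envelopes. The paper instead transports the base optimizer into the $\e_n$-market via Lemma \ref{lemAdm} ($\widehat c(x,0)/N^{\e_n}$, truncated as in \eqref{defck}), forms the midpoint $\bar c^n$ there, and compares $U_0^{\bar c^n}$ with $u\bigl(\tfrac{x_n+x}{2},\e_n\bigr)$, which converges to $u(x,0)$ by Theorem \ref{mainThm1}. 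This choice is what makes Lemmas \ref{lemAdm} and \ref{lemck} directly applicable; with your direction you would need to prove a new BSDE stability statement for a sequence of consumptions known only up to $\bL^0$-bounds.
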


{
Let us recall that under the conditions of Theorem \ref{mainThm1}, the existence and uniqueness of the optimizer to \eqref{primalProblem} follows from Theorem \ref{mainThm1}, item $(iii)$. Let us also recall that, for a given nonnegative consumption stream $c$,  $U^c$ was defined in \eqref{defUEZ}.
The following theorem establishes the convergence of the indirect utility processes. 
}

{
 \begin{Theorem}\label{thmConvU}
 Let $x>0$ be fixed. Then, under the conditions of Theorem \ref{mainThm2}, we have 
$$
\lim\limits_{(x',\e) \to (x,0)}U^{\widehat c(x', \e)} =  U^{\widehat c(x,0)},\quad ucp.$$
 \end{Theorem}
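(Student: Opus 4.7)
The plan is to represent each indirect utility process $U^{\widehat c(x',\e)}$ as the unique class~D solution of the BSDE \eqref{defUEZ} with terminal condition $U_T(\widehat c_T(x',\e))$ and generator $(s,u)\mapsto f(\widehat c_s(x',\e), u)$, and then to invoke the BSDE stability result of Lemma \ref{lemck}, which was developed precisely for non-Lipschitz drivers and unbounded terminal data of this type. Theorem \ref{mainThm2} provides $\widehat c(x',\e)\to \widehat c(x,0)$ in measure $(d\kappa\times\bP)$ as $(x',\e)\to(x,0)$; since $\kappa$ carries an atom at $T$, restricting the convergence to $\{T\}\times\Omega$ also yields $\widehat c_T(x',\e)\to\widehat c_T(x,0)$ in $\bP$-probability, and continuity of $U_T$ on $(0,\infty)$ then upgrades this to convergence of the terminal conditions in probability.

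Next I would verify the uniform control required to activate Lemma \ref{lemck}. The essential supremum hypothesis of Theorem \ref{mainThm2} gives a $(d\kappa\times\bP)$-a.e.\ finite process that dominates the family $\{\widehat c(x',\e)\}_{(x',\e)\in B_{\e'}(0,0)}$ from above, while the essential infimum hypothesis provides a uniform $(d\kappa\times\bP)$-a.e.\ lower bound on the indirect utility processes $U^{\widehat c(x',\e)}$; combined with $U^{c}\le 0$ for admissible $c$ in the regime $\gamma>1$, this yields a uniform two-sided envelope for the $U$-family sufficient for the stability estimate. Continuity of $f(\cdot,u)$ in the consumption argument on $(0,\infty)$ then gives $f(\widehat c_s(x',\e),u)\to f(\widehat c_s(x,0),u)$ in measure $(d\kappa\times\bP)$ for every admissible $u$, supplying the convergence-of-generators input for Lemma \ref{lemck}. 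Applying the lemma delivers the desired ucp convergence $U^{\widehat c(x',\e)}\to U^{\widehat c(x,0)}$.

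The principal obstacle is the interaction of the non-Lipschitz character of the Epstein-Zin generator, stemming from the exponent $1-\frac{1}{\theta}$ in \eqref{deff}, with the unboundedness of $U_T$ near $c=0$ when $\gamma>1$; a naive Gronwall argument would fail. Both features are precisely what Lemma \ref{lemck} is designed to handle, but translating the pointwise-in-measure convergence of $\widehat c$ into the functional control demanded by that lemma requires the joint uniform envelope secured through the essential supremum and infimum assumptions of Theorem \ref{mainThm2}. A standard subsequence extraction is finally used to upgrade subsequential ucp convergence to convergence of the full net $(x',\e)\to(x,0)$.
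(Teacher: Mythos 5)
Your proposal matches the paper's argument: the paper's own proof is a short sketch stating that the result follows \emph{entirely similarly to the proof of Lemma \ref{lemck}}, via truncation of the consumption and terminal condition and the Lipschitz-generator BSDE stability theorem, with the hypotheses of Theorem \ref{mainThm2} supplying the uniform envelopes needed to remove the truncations --- exactly the route you describe. The only caveat is that Lemma \ref{lemck} as stated concerns the specific truncated sequence $\widetilde c^k$ rather than $\widehat c(x',\e)$, so one must rerun its proof with the new inputs rather than literally invoke it, but you acknowledge this and the paper elides the same point.
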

 }
 {
 Next, for a fixed $x>0$ and $\e = 0$, let $y>0$ be as in Assumption \ref{asBaseModel} and suppose that the dual minimizer has the form    
 \be\label{defD*}
 \widehat D_t(y, 0) = 
C 
 \exp\left( \int_0^t \partial_uf\left( \widehat c_t(x,0), U^{\widehat c(x,0)}_t\right)ds\right) \partial_cf\left(\widehat c_t(x,0), U^{\widehat c(x,0)}_t\right),\quad t\in[0,T],
 \ee
 for some constant $C>0$ and 
\be\label{mart}
 X^{\widehat c(x,0)}\widehat D(y,0) + \int_0^\cdot \widehat D_s(y,0) \widehat c_s(x,0) ds\quad {\rm is ~a~ martingale,}
\ee
where $X^{\widehat c(x,0)}$ is the wealth process starting from $x$ financing $\widehat c(x,0)$ (given by \eqref{Xe} at $\e=0$).
We note that sufficient conditions for \eqref{defD*} and \eqref{mart} are similar to the ones for Assumption \ref{asBaseModel} to hold; see the discussion after Assumption \ref{asBaseModel}. In particular, both \eqref{defD*} and \eqref{mart}  hold if the market price of risk $(\mu^0)^\top 
\(\(\sigma^0\)^{\top}\)^{-1}\(\sigma^0\)^{-1}
\mu^0$ process is bounded as well as $\gamma>1$ and $\psi>1$. Then, the conditions of \cite[Theorem 3.6, p. 1002]{MX}, apply and \cite[Corollary 3.7, p. 1002]{MX} implies \eqref{defD*} and \eqref{mart}. Furthermore, representation \eqref{defD*} for the optimal state-price density goes back to \cite{DuffieSkiadas94} and is known as the utility gradient approach.  
\begin{Theorem}\label{lemWealthPr}
Let $x>0$ be fixed. 
Let the assumptions of Theorem \ref{mainThm2}, \eqref{defD*}, and \eqref{mart} hold. Then, $\lim\limits_{(x',\e)\to(x,0)} X^{\widehat c(x',\e)} = X^{\widehat c(x,0)} $, in measure $(d\kappa \times \bP)$, where $X^{\widehat c(x',\e)}$ is {any} wealth process financing $ \widehat c(x',\e)$ starting from the initial capital $x'$ in the market where the traded assets are given by \eqref{Se}. Furthermore, if $\lim\limits_{\e\to 0}N^\e= 1$, $ucp$, then $\lim\limits_{(x',\e)\to (x,0)}X^{\widehat c(x',\e)} = X^{\widehat c(x,0)} $,  ucp. 
\end{Theorem}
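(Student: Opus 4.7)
My approach is to use \eqref{mart} to pin down $X^{\widehat c(x,0)}$ pointwise via a conditional expectation, build a matching representation for $X^{\widehat c(x',\e)}$ using a state‑price density in the $\e$‑market, and pass to the limit using the already established convergence $\widehat c(x',\e)\to \widehat c(x,0)$ from Theorem \ref{mainThm2}. The martingale identity \eqref{mart}, rewritten as
\[
\widehat D_t(y,0)\,X^{\widehat c(x,0)}_t \;=\; \bE_t\!\left[\int_t^T \widehat D_s(y,0)\,\widehat c_s(x,0)\,d\kappa_s\right],\qquad t\in[0,T],
\]
uniquely characterises $X^{\widehat c(x,0)}$ since $\widehat D(y,0)>0$, and in particular shows the wealth process financing the optimal consumption from $x$ is essentially unique.

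Next, I would set $\widetilde D^\e := \widehat D(y,0)\cdot N^\e$ and verify, via Itô's product rule applied to the dynamics in \eqref{defNe}, \eqref{eqMinStatePriceDensity}, and \eqref{defD*}, that $\widetilde D^\e\in \cD(y,\e)$. The idea is that the drift $(r^0-r^\e)$ and the stochastic integral $-\int \lambda^\e\,dR$ appearing in \eqref{defNe} exactly absorb the discrepancy between the $\e$-market and $0$-market dynamics of any admissible wealth process, in the spirit of the numeraire-change device of \cite{MostovyiNumeraire}. The supermartingale property of $\widetilde D^\e X^{\widehat c(x',\e)} + \int_0^\cdot \widetilde D^\e_s \widehat c_s(x',\e)\,d\kappa_s$ then yields, for any wealth process $X^{\widehat c(x',\e)}$ financing $\widehat c(x',\e)$ from $x'$, both a pointwise lower bound
\[
\widetilde D^\e_t\,X^{\widehat c(x',\e)}_t \;\geq\; \bE_t\!\left[\int_t^T \widetilde D^\e_s\,\widehat c_s(x',\e)\,d\kappa_s\right]
\]
and the global budget inequality $\bE\!\left[\int_0^T \widetilde D^\e_s\,\widehat c_s(x',\e)\,d\kappa_s\right] \le x' y$.

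The next step is to show that, as $\e\to 0$, the global budget inequality tightens to equality. Since the right-hand side at $\e=0$ equals $xy$ by \eqref{mart} and $x'y\to xy$, one is forced to have $\bE\!\left[\int_0^T \widetilde D^\e_s\widehat c_s(x',\e)\,d\kappa_s\right]\to xy$. By the convergence $\widehat c(x',\e)\to\widehat c(x,0)$ in measure (Theorem \ref{mainThm2}), the convergence $N^\e\to 1$ (in measure, and hence $\widetilde D^\e\to \widehat D(y,0)$), together with the uniform-integrability estimates provided by the local $\esssup\widehat c(x',\e)$ and $\essinf U^{\widehat c(x',\e)}$ bounds assumed in Theorem \ref{mainThm2}, the conditional form of this budget identity passes to the limit at each $t$, giving
\[
\widetilde D^\e_t\,X^{\widehat c(x',\e)}_t \;\longrightarrow\; \widehat D_t(y,0)\,X^{\widehat c(x,0)}_t
\]
in measure $(d\kappa\times\bP)$. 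Dividing by the positive, converging process $\widetilde D^\e$ yields the stated convergence of the wealth processes. For the ucp refinement, if $N^\e\to 1$ ucp, then $\widetilde D^\e\to \widehat D(y,0)$ and $1/\widetilde D^\e\to 1/\widehat D(y,0)$ uniformly on $[0,T]$, so the preceding argument carries over with suprema and yields ucp convergence.

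The main obstacle is the localization step: turning expectation-level tightness of the budget at $t=0$ into pointwise-in-$t$ tightness of the supermartingale inequality. This requires the uniform integrability supplied by the hypotheses of Theorem \ref{mainThm2} combined with the martingale representation at $\e=0$, and is closely modelled on the convergence-of-consumption proof in Theorem \ref{mainThm2} itself. A secondary, more superficial, issue is handling terminal bequest in the $d\kappa$ bookkeeping, which is resolved by consistently reading $X_T$ as the terminal wealth paired with the terminal lump consumption in both markets.
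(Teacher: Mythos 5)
Your overall framework — deflate the $\e$-market by $\widetilde D^\e := \widehat D(y,0)N^\e$, use the supermartingale property of $\widetilde D^\e X^{\widehat c(x',\e)} + \int \widetilde D^\e_s\widehat c_s(x',\e)\,d\kappa_s$, identify $X^{\widehat c(x,0)}$ from the martingale in \eqref{mart}, and divide out the deflator at the end — is the same as the paper's. But two load-bearing steps are asserted rather than proved, and they are exactly where the content of the theorem lies. First, to get the budget to tighten you need not only $\limsup \bE[\int_0^T\widetilde D^\e_s \widehat c_s(x',\e)\,d\kappa_s]\le xy$ (which does follow from the supermartingale budget and $x'\to x$) but also the matching lower bound $\liminf \ge xy$. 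Your phrase ``one is forced to have \ldots'' skips this; it is not automatic. The paper obtains it via Fatou's lemma after passing to an a.e.\ convergent subsequence of $D^n c^n$ and by using that the \emph{optimal} terminal wealth is zero in every market (so the terminal value of the deflated wealth plus cumulative consumption is entirely the consumption integral). Your proposal never invokes $X_T = 0$ nor Fatou, so the liminf bound is unjustified.

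Second, you correctly flag the localization step (from $t=0$ tightness to pointwise-in-$t$ tightness) as the ``main obstacle,'' but you then claim it follows from the uniform-integrability-type hypotheses of Theorem \ref{mainThm2} and is ``closely modelled on'' its proof. Neither is right: the $\esssup\widehat c$ and $\essinf U^{\widehat c}$ hypotheses give $\bL^0$-envelopes, not uniform integrability, and the proof of Theorem \ref{mainThm2} is a BSDE/concavity argument that produces no supermartingale estimates of the kind you need here. The paper's actual mechanism is different and specific: define $L^n := \tfrac 1{x_n}(X^n D^n + \int_0^\cdot D^n_s c^n_s\,d\kappa_s)$, note $L^n_0 = L_0$ and (via Fatou and $X^n_T=0$) $\liminf L^n_T \ge L_T$ with $L$ a true martingale, then invoke a supermartingale-to-martingale ucp convergence result to get $L^n\to L$ ucp; only afterwards does one prove convergence of the cumulative consumption integrals (by a two-sided Fatou on $\int_0^t$ and $\int_t^T$, then monotonicity of the integrals), and finally divide by $D^n$. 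Without this chain, the pointwise/uniform convergence of $\widetilde D^\e_t X^{\widehat c(x',\e)}_t$ does not follow from budget tightness at $t=0$. Your sketch is therefore incomplete at precisely the step where the proof is nontrivial.
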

}
\section{On the integrability condition on perturbations}\label{secInteg}
{Let us revisit Assumption \ref{asPert}. 
For a fixed $\e\in(-\e_0, \e_0)$, in order for $\widetilde \cD(1, \e)\neq \emptyset$, where $\widetilde \cD(1, \e)$ are defined in \eqref{defDa}, there must exist a supermartingale state price density $D^\e\in\cD(1, \e)$, such that 
\be\label{441}
\bE\[\int_0^T (D^\e_s)^{1-\psi}ds\]<\infty.
\ee 
The natural candidate for \eqref{441} to hold is to check whether the minimal state price density given by \eqref{eqMinStatePriceDensity} satisfies the integrability condition \eqref{441}. }
{
Another sufficient condition for Assumption \ref{asPert} to hold is given by 
\be\nn
\bE\[\int_0^T \(\widehat D_s(y, 0)N^\e_s\)^{1- \psi}ds\]<\infty,\quad \e\in(-\e_0,\e_0),
\ee
where $\widehat D_s(y, 0)$ is the dual minimizer at $(y,0)$, which exists by Assumption \ref{asBaseModel} and $N^\e$ are given by \eqref{defNe}.}
Condition 
\eqref{441} is the only integrability condition needed on the perturbations to ensure that the dual domain incorporating the additional integrability for perturbed models is non-empty, that is: ${\widetilde\cD}(1,\e)\neq\emptyset$, for $\e\neq 0.$
Perhaps the most surprising feature in our analysis (at least for the authors) was  that other than \eqref{441}
, no further integrability needs to be imposed. This can be explained as follows, where the key is in the utility maximization considerations. It is well-known that for the expected utility maximization from terminal wealth, the key role is played by the finiteness of the value functions, see \cite{KS}, where the finiteness of the primal value function (from above) is assumed, and \cite{KS2003}, where the finiteness of the dual value function (from above) is required. To be more precise, both conditions require the value functions to be less than $\infty$ (in \cite[Theorem 2.2]{KS}, under the asymptotic elasticity). 
In \cite{Mostovyi2015}, the finiteness of both primal and dual value functions (from below and above) is introduced and proven to be necessary and sufficient for the standard assertions of the utility maximization theory in the case of additive and stochastic utility. 

In the present setting, in view of the choice of the parameters $\gamma>1$ and $\psi>1$, we obtain that the associated value function is negative-valued. This follows from the analysis of the associated BSDEs as in {Lemma \ref{lemFinValue}}. In particular, the base model exhibits  a finiteness conditions for both the primal and dual value functions. For the perturbed models, as $\gamma$ and $\psi$ do not change, we still obtain that the primal value function is negative-valued, and the dual one too. Here, the primal gives a lower bound for the dual via the conjugacy relations. Thus the blow-up to $\infty$ is not possible under perturbations of the models. In turn, the blow-up to $-\infty$ is also not possible, as \eqref{primalProblem} is a {\it maximization} problem, and thus finiteness of the base model guarantees that we do not have a blow-up as long as the processes $N^\e$'s  appearing in \eqref{defNe} are well-defined, and without any further integrability conditions needed on this family. This situation can be compared to the counterexample in \cite{MostovyiNumeraire}, where blow-up does happen for a particular form of perturbations,  as the utility function there can take positive values.

The connection between the last two paragraphs can be further illustrated by the case  of $\gamma = \frac 1 \psi$ (going outside the scope of the analysis in this paper). 
Then the problem \eqref{primalProblem} reduces to the one with an {\it additive utility}, given by  
$$U^c_0 = \bE\[ \int_0^T \delta e^{-\delta s}\frac {c_s^{1-\gamma}}{1-\gamma} ds + e^{-\delta T}\frac {c_T^{1-\gamma}}{1-\gamma}\].$$
In this case, and with $\gamma>1$, the value function is negative-valued, and so is the dual one, thus precluding the blow-up to $\infty$. The blow-up to $-\infty$ is not possible by the feasibility of the constant-valued consumptions, which also gives a  lower bound for the dual problem.
\section{Proofs}
\label{secProofs}
\subsection{Preliminary Results} We begin with the following structural lemma. 
\begin{Lemma}\label{lemAdm}Let the conditions of Theorem \ref{mainThm1} hold, let $x>0$ be fixed, and $y>0$ be given through \eqref{conjugacy}. Then, for every $\e\in(-\e_0,\e_0)$, we have
 \be\label{eAdm}
 \bs
c^\e: &=
 \widehat c(x,0)\frac 1{N^\e} 
\in\cA(x,\e) ,\quad x>0,\\ 
D^\e:&=\widehat D(y, 0)N^\e\in \cD(y, \e),\quad y>0,
\end{split}
\ee
where $\widehat c(x,0)$ and $\widehat D(y, 0)$ are the optimizers, for $\e=0$, to \eqref{primalProblem} and \eqref{dualProblem}, respectively. 
\end{Lemma}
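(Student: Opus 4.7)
The plan is to view $N^\e$ as a change-of-numéraire process intertwining the base market $(\e=0)$ and the perturbed market. The key underlying identity I would first record is that $N^\e D^{0,0}=D^{\e,0}$: both sides are stochastic exponentials, and the drift and martingale parts match because $\sigma^{0,\top}\lambda^\e=(\sigma^\e)^{-1}\mu^\e-(\sigma^0)^{-1}\mu^0$, so the covariation term $\lambda^{\e,\top}\mu^0$ appearing in the drift of $N^\e$ cancels exactly. This identity is the conceptual reason the lemma should hold: $N^\e$ converts a base-model state-price density into an $\e$-market one, and inversely converts wealth processes in one market into wealth processes in the other.

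For the first claim ($c^\e\in\cA(x,\e)$), I would let $\widehat X:=X^{0,\widehat\pi,\widehat c(x,0)}$ be the nonnegative wealth process in the base model financing $\widehat c(x,0)$, which exists since $\widehat c(x,0)\in\widetilde\cA(x,0)\subset\cA(x,0)$ by Assumption~\ref{asBaseModel}. Apply Itô's formula to $\widetilde X:=\widehat X/N^\e$, using the dynamics \eqref{defNe} of $N^\e$ and the dynamics of $\widehat X$. A direct bilinear computation shows
\[
 d\widetilde X_t=\widetilde X_t\Bigl[r^\e_t\,dt+(\tilde\pi^\e_t)^\top\bigl(\mu^\e_t\,dt+\sigma^\e_t\,dW^\rho_t\bigr)\Bigr]-\frac{\widehat c_t(x,0)}{N^\e_t}\,d\kappa_t,
\]
where the new risky-asset weights are $\tilde\pi^\e_t=(\sigma^\e_t)^{-\top}\bigl(\sigma^{0,\top}_t\lambda^\e_t+\sigma^{0,\top}_t\widetilde{\widehat\pi}_t\bigr)$ and $\pi^{\e,0}_t$ is set to make the weights sum to one. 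Here the cross terms $\lambda^{\e,\top}\mu^0$ and $\lambda^{\e,\top}\sigma^0\sigma^{\e,\top}\tilde\pi$ again cancel against the covariation term, once one uses $\sigma^\e\alpha^\e+\sigma^\e(\sigma^0)^{-1}\mu^0=\mu^\e$. Since $\widehat X\ge 0$ and $N^\e>0$, $\widetilde X\ge 0$, and $\widetilde X_0=x$, so $c^\e\in\cA(x,\e)$.

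For the second claim ($D^\e\in\cD(y,\e)$), I would reverse the above calculation. Given $c\in\cA(1,\e)$ with some admissible strategy $\pi$ producing a nonnegative $X^{\e,\pi,c}$, apply Itô's formula to $Y:=N^\e X^{\e,\pi,c}$. The same algebraic cancellations yield
\[
 dY_t=Y_t\Bigl[r^0_t\,dt+(\tilde\pi^0_t)^\top\bigl(\mu^0_t\,dt+\sigma^0_t\,dW^\rho_t\bigr)\Bigr]-N^\e_t c_t\,d\kappa_t,
\]
so $Y$ is a nonnegative wealth process in the base model starting from $1$ financing the consumption stream $\bar c:=N^\e c$. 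Therefore $\bar c\in\cA(1,0)$. Since $\widehat D(y,0)\in\cD(y,0)$, the process $\widehat D(y,0)\,Y+\int_0^\cdot\widehat D_s(y,0)\bar c_s\,d\kappa_s$ is a supermartingale, and rewriting it as $D^\e X^{\e,\pi,c}+\int_0^\cdot D^\e_s c_s\,d\kappa_s$ (together with $D^\e_0=y$) exactly exhibits $D^\e\in\cD(y,\e)$.

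The main obstacle, and the only place where real work is needed, is the bookkeeping in the Itô's formula step: matching the drift after combining the drift of $N^\e$, the drift of the wealth process, and the quadratic-covariation term, and then recognizing the result as the drift of a wealth process in the opposite market. The integrability assumption that $\lambda^\e$ is $R$-integrable (so $N^\e$ is a well-defined positive semimartingale) is exactly what is needed to justify the Itô calculations; no further integrability is invoked, since admissibility is only a nonnegativity constraint.
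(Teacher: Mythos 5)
Your proof is correct and takes essentially the same route as the paper, which disposes of the lemma in one line (``follows from It\^o's lemma'') after noting progressive measurability of $N^\e$; you have simply carried out the It\^o computation explicitly, verified the cancellation via $(\sigma^0)^\top\lambda^\e=(\sigma^\e)^{-1}\mu^\e-(\sigma^0)^{-1}\mu^0$, and noted the reverse direction for the dual claim. The identity $N^\e D^{0,0}=D^{\e,0}$ you record is a nice sanity check of the num\'eraire role of $N^\e$, though it is not logically needed for the argument.
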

\begin{proof}
First, we observe that for every $\e\in(-\e_0,\e_0)$, the process $N^\e$ is progressively measurable by \cite[Proposition 1.13, p. 5]{KaratzasShreve1}. 
Now, the assertion of the lemma follows from  It\^o's lemma. 
\end{proof}
Let us introduce some notations used in this section's remaining part.
\begin{itemize}
\item Let $\cS^2$ be the space of one-dimensional continuous adapted processes $(Y_t)_{t\in[0,T]}$ such that $\bE\[ \sup\limits_{t\in[0,T]}|Y_t|^2\]<\infty$.
\item Let $\cS^\infty = \left\{Y\in \cS^2:~||\sup\limits_{t\in[0,T]}|Y_t|||_{\infty}<\infty\right\}$. 
\item Let $\cM^2$ denote the space of predictable multidimensional  processes $(Z_t)_{t\in[0,T]}$, such that $\bE\[\int_0^T |Z_t|^2 dt\]<\infty$. 
\end{itemize}

 With $f$ given in \eqref{deff}, let us consider the BSDE
 \be\label{BSDEU}
 U^c_t= \frac{c_T^{1-\gamma}}{1 - \gamma} + \int_t^T f(c_s, U^c_s)ds - \int_t^T Z_s^c dB_s,\quad t\in[0,T].
 \ee
Next, with the transformation 
\be\label{transfY}
(Y, Z) := e^{-\delta \theta t} (1- \gamma) (U^c, Z^c),\quad t\in[0,T],
\ee
 we obtain a BSDE for $(Y, Z)$ of the form
\be\label{BSDEY}
Y_t = e^{-\delta\theta T}c_T^{1-\gamma} + \int_t^T F(s, c_s, Y_s) ds - \int_t^T Z_s dB_s,\quad t\in[0,T],
\ee
where,  for $\theta<0$, $
F(t,x, y) := \delta \theta e^{-\delta t} x^{1 - \frac 1\psi} y^{1 - \frac 1 \theta}\leq 0$   is monotonically decreasing in $y$. 
\begin{Lemma}\label{lemFinValue}
Under the conditions of Theorem \ref{mainThm1}, for every  $(z,\e)\in(0,\infty)\times(-\e_0, \e_0)$, $u(z, \e)$ and $v(z, \e)$ are finite-valued. 
\end{Lemma}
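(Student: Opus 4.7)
The plan is to verify the four one-sided bounds $u(z,\e)\le 0$, $v(z,\e)\le 0$, $u(z,\e)>-\infty$, $v(z,\e)>-\infty$ separately, exploiting the sign structure enforced by $\gamma>1$, $\psi>1$ (so that $\theta=(1-\gamma)/(1-1/\psi)<0$). Since $\gamma>1$, $U_T\le 0$ and, by \eqref{defVT}, $V_T\le 0$. Passing from \eqref{BSDEU} to \eqref{BSDEY} via \eqref{transfY}, the terminal condition $e^{-\delta\theta T}c_T^{1-\gamma}$ is nonnegative, and the driver $F(t,x,y)=\delta\theta e^{-\delta t}x^{1-1/\psi}y^{1-1/\theta}$ vanishes at $y=0$ and is monotone decreasing in $y$ on $[0,\infty)$ (since $\partial_y F$ has the sign of $\theta(1-1/\theta)<0$). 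Hence the trivial pair $(Y^0,Z^0)\equiv(0,0)$ solves \eqref{BSDEY} with zero terminal data, and a BSDE comparison for monotone drivers yields $Y\ge 0$, equivalently $U^c\le 0$ for every $c\in\widetilde\cA(z,\e)$, so $u(z,\e)\le 0$. The analogous argument on the dual BSDE \eqref{defVD}, using $V_T\le 0$ and the matching sign/monotonicity of $g$ in \eqref{defg}, gives $v(z,\e)\le 0$.

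For the lower bound on $u$, I take the constant plan $c^{(z)}\equiv z/(T+1)$ suggested by Remark \ref{remFinValue}. Because $r^\e\ge 0$, the pure-cash strategy finances $c^{(z)}$ and keeps the wealth process nonnegative on $[0,T]$, while $\bE[\int_0^T(c^{(z)})^{1-1/\psi}ds]<\infty$, so $c^{(z)}\in\widetilde\cA(z,\e)$. For this deterministic $c^{(z)}$, BSDE \eqref{defUEZ} collapses to the scalar backward ODE $\dot U_t=-f(c^{(z)},U_t)$ on $[0,T]$ with terminal value $U_T(c^{(z)})\in\bR$; by standard ODE theory (monotonicity of $f$ in its second argument and the explicit polynomial structure) this ODE admits a unique finite solution on $[0,T]$. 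Consequently $u(z,\e)\ge U^{c^{(z)}}_0>-\infty$.

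For the lower bound on $v$, I invoke the Fenchel-type inequality $u(x,\e)\le v(y,\e)+xy$, valid for all $x,y>0$: for arbitrary $c\in\widetilde\cA(x,\e)$ and $D\in\widetilde\cD(y,\e)$, the pointwise Legendre duality between the aggregators $f$ and $g$ combined with the supermartingale inequality $\bE[\int_0^T D_s c_s d\kappa_s]\le xy$ (embedded in \eqref{defD}) produces $U^c_0\le V^D_0+xy$ via a BSDE comparison, and taking $\sup_c$, $\inf_D$ yields the global bound. Then $v(z,\e)\ge u(1,\e)-z>-\infty$ by the previous paragraph. The main obstacle is precisely this Fenchel-type inequality in the recursive setting, since, unlike the additive case, it requires a BSDE comparison for the difference of two coupled, non-Lipschitz BSDEs whose generators interact only through the pointwise Legendre relation; the conjugacy analysis of \cite{MX} provides the key tool here, and I would also keep in reserve the specific deflator $D^\e:=\widehat D(y,0)N^\e$ from Lemma \ref{lemAdm}, whose membership in $\widetilde\cD(y,\e)$ under Assumption \ref{asPert} allows a direct integral estimate of $V^{D^\e}_0$ from \eqref{defVD} via the integrability $\bE[\int_0^T(D^\e_s)^{1-\psi}ds]<\infty$ should the general weak-duality route prove cumbersome.
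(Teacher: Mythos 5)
Your proof follows the same overall strategy as the paper's: non-positivity of $u$ and $v$ from the sign structure forced by $\gamma>1$, $\psi>1$; a lower bound on $u$ by exhibiting one admissible consumption bounded away from $0$ and $\infty$ and showing the corresponding Epstein--Zin utility is finite; and a lower bound on $v$ inherited from $u$ via the Fenchel inequality from \cite[Theorem 2.7]{MX}. Where you differ is in the choice of test consumption and the finiteness mechanism: you take the constant plan $c^{(z)}\equiv z/(T+1)$ (exactly as hinted in Remark \ref{remFinValue}) and reduce \eqref{defUEZ} to a scalar backward ODE, whereas the paper instead truncates the transferred optimizer $\frac{z}{x+1/m}\frac{1}{m}\vee(\widehat c(x,0)/N^\e)\wedge m$, solves a Lipschitz-truncated version of \eqref{BSDEY} via \cite[Theorem 5.1]{CohenElliot2}, confines the truncated solution to $[0,\bar C]$ by comparison, and then identifies it with the untruncated one for large truncation level. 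Your ODE route is more elementary and self-contained, but the phrase ``standard ODE theory'' should be backed by the one-line reason the backward flow does not explode: writing $w(s)=U_{T-s}$, one has $\dot w=f(c,w)$ with $w(0)=U_T<0$; as $w\to-\infty$ the superlinear factor $((1-\gamma)w)^{1-1/\theta}$ (with $1-1/\theta>1$) dominates the linear term $-\delta\theta w$, so $f(c,w)\to+\infty$ and the flow pushes $w$ back toward $0$, while near $0^-$ the driver vanishes, preventing a crossing of zero. Similarly, your comparison-with-zero argument for $u\le 0$ tacitly presupposes that a class-D solution $Y$ to the non-Lipschitz BSDE \eqref{BSDEY} exists for every $c\in\widetilde\cA(z,\e)$; the paper makes this explicit by invoking the localization technique of \cite{BriandHu} as in the proof of \cite[Proposition 2.2]{Hao17}, which is what gives the non-positivity of $U^c_0$ its content. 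These are fillable details rather than gaps, and your overall route is sound.
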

\begin{proof}
Let us fix $(z,\e)\in(0,\infty)\times(-\e_0, \e_0)$. From Lemma \ref{lemAdm}, we deduce that $\frac zx\widehat c(x,0)\frac 1{N^\e} \in\cA(z, \e)$. Therefore, for a fixed $m\geq 1$, \be\label{431}
c:=\frac{z}{x+\tfrac 1m}{\frac 1m}\vee\(\widehat c(x,0)\frac 1{N^\e}\)\wedge m \in\cA(z, \e).
\ee In particular, we have   
 $$\bE[(c_T)^{1-\gamma}]<\infty.$$ 
 Next, with $F^k(t, c_t, y)  : = \delta \theta e^{-\delta t} c_t^{1- \frac 1 \psi}(|y| \wedge k)^{1- \frac 1 \theta}$ (note that the process $c$ is bounded from above by $m$ here), let us consider 
$$Y^k_t = e^{-\delta\theta T}c_T^{1-\gamma} + \int_t^T F^k(s, c_s, Y^k_s) ds - \int_t^T Z^k_s dB_s,\quad t\in[0,T],\quad k\in\bN.$$
This is a BSDE with a Lipschitz generator and a bounded terminal condition. Therefore,  by 
\cite[Theorem 5.1]{CohenElliot2}, this BSDE admits a unique solution $(Y^k, Z^k)\in\cS^2\times \cM^2$. {Furthermore, as $c$ in \eqref{431} is bounded away from $0$ and $\infty$, we have $\frac 1{\bar C}\leq Y^k_T\leq \bar C$, for some constant $\bar C>0$. As, additionally, $F^k$ is non-positive-valued, using the comparison result for BSDEs  
\cite[Theorem 2.4, p. 517]{Pard}, one can show that $Y^k$ takes values in $[0,\tr {\bar C}]$. Therefore, with $c$ in \eqref{431} and the associated $Y$ given via \eqref{BSDEY}, for every $k\geq \bar C$, $F^k(t, c_t, Y^k) = F(t, c_t, Y^k)$, $t\in[0,T]$, $\Pas$. As a result, $(Y, Z) : = (Y^k, Z^k)$ is a solution to \eqref{BSDEY} (for $c$ given in \eqref{431}). 
}

Changing variables back to $(U^c, Z^c)$, {that is from \eqref{transfY}, and with $$(U^c_t, Z^c_t) := \frac{e^{\delta \theta t} }{1- \gamma}(Y_t, Z_t), \quad t\in[0,T],$$} one can show that this pair satisfies \eqref{BSDEU} and further, {following the proof of \cite[Proposition 2.2]{Hao17}}, that $U^c$ satisfies \eqref{defUEZ}, $U^c$ is non-positive-valued and is bounded away from $-\infty$. As in \eqref{primalProblem}, we take the supremum over all admissible consumptions, $u(z, \e) \geq U^c_0$ (for $c$ as above). Next, {also similarly to the proof of \cite[Proposition 2.2]{Hao17} and relying on the localization technique from \cite{BriandHu}}, one can see that for every admissible consumption, $U^c_0\leq 0$. We conclude that $u(z,\e)\leq 0$ and is finite-valued  for every $\e\in(-\e_0,\e_0)$. 
Now, by \cite[Theorem 2.7]{MX}, $v(z,\e) \geq u(x,\e) -xz$, $(x,z)\in(0,\infty)^2$, and thus $v(z,\e)$ is bounded away from $-\infty$. 
Furthermore, similarly to showing that $u(z,\e)\leq 0$, one can show that  $v(z,\e)\leq 0$. We conclude that $v(z,\e)$ is finite-valued.  
\end{proof}
\begin{Lemma}\label{lemusc}Let $x>0$ be fixed. Then, 
under the conditions of Theorem \ref{mainThm1}, 
we have 
\be\label{bsdeConv}
\liminf\limits_{(x',\e) \to (x,0)} u(x',\e) \geq u(x,0).
\ee
\end{Lemma}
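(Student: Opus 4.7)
The plan is to construct, for each $(x',\e)$ near $(x,0)$, a consumption stream $c^{\e,x'}\in\widetilde\cA(x',\e)$ whose Epstein-Zin utility converges to $u(x,0)$; since $u(x',\e)\geq U^{c^{\e,x'}}_0$ from \eqref{primalProblem}, this immediately yields \eqref{bsdeConv}. Writing $\widehat c:=\widehat c(x,0)$ for the base-model optimizer provided by Assumption \ref{asBaseModel} (so that $u(x,0)=U^{\widehat c}_0$ is finite by Lemma \ref{lemFinValue}), the natural candidate is
$$c^{\e,x'}:=\frac{x'}{x}\cdot\frac{\widehat c}{N^\e},$$
obtained by transporting $\widehat c$ into the $\e$-th market via the multiplicative process $N^\e$ of \eqref{defNe} and rescaling to initial wealth $x'$.

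First, I would verify admissibility and integrability. The membership $c^{\e,x'}\in\cA(x',\e)$ follows by applying Lemma \ref{lemAdm} together with the conic scaling of $\cA(\cdot,\e)$ that is built into the wealth equation \eqref{Xe}. To upgrade to $\widetilde\cA(x',\e)$, I would invoke Assumption \ref{asPert} combined with Lemma \ref{lemConvA}, which gives $\cA(x',\e)=\widetilde\cA(x',\e)$ for every $\e\in(-\e_0,\e_0)$; this in turn makes the BSDE \eqref{defUEZ} well-posed for $c^{\e,x'}$.

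Next, I would pass to the limit along an arbitrary sequence $(x'_n,\e_n)\to(x,0)$. By hypothesis $N^{\e_n}\to 1$ in measure $(d\kappa\times\bP)$ and $x'_n/x\to 1$, so $c^{\e_n,x'_n}\to\widehat c$ in measure $(d\kappa\times\bP)$. Applying the BSDE stability result Lemma \ref{lemck} to the family of Epstein-Zin BSDEs \eqref{defUEZ} driven by $c^{\e_n,x'_n}\to \widehat c$, one obtains ucp convergence $U^{c^{\e_n,x'_n}}\to U^{\widehat c}$; in particular the initial values converge to $u(x,0)$. Combining with the bound $u(x'_n,\e_n)\geq U^{c^{\e_n,x'_n}}_0$ coming from \eqref{primalProblem} and taking the liminf over the sequence would deliver \eqref{bsdeConv}.

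The main obstacle will be the application of Lemma \ref{lemck}: the generator $f$ from \eqref{deff} is not Lipschitz in either argument and the terminal marking $(c^{\e,x'}_T)^{1-\gamma}$ is unbounded (with a singularity at zero), so converting in-measure convergence of the consumption into ucp convergence of the BSDE solutions relies on the comparison and a priori estimates established in Lemma \ref{lemck}. To match its hypotheses cleanly, I anticipate introducing a preliminary double truncation $\tfrac1m\vee \widehat c\wedge m$, proving the analogous lower bound at each level $m$, and then extracting a diagonal sequence to remove the truncation while preserving the convergence of initial BSDE values.
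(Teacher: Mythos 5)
Your overall strategy coincides with the paper's: transport the base optimizer via $N^\e$, rescale for initial wealth, truncate to a bounded consumption, exploit BSDE stability with Lipschitz generator and bounded terminal data, and pass to the limit using a three-$\e'$ argument to remove the truncations. However, there are two concrete gaps in the way you propose to execute this.

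First, you cannot invoke Lemma~\ref{lemck} directly. That lemma is stated under the hypotheses of Theorem~\ref{mainThm2}, which are strictly stronger than those of Theorem~\ref{mainThm1} (they include the essential-supremum/infimum conditions that are unavailable here), and it is formulated only for the fixed initial wealth $x$: the truncated family $c^{k,\delta',M}$ there lies in $\cA(x,\e_k)$, so it would only produce $\liminf_{\e\to 0}u(x,\e)\geq u(x,0)$, not the full two-variable $\liminf_{(x',\e)\to(x,0)}u(x',\e)\geq u(x,0)$. The paper therefore does not cite Lemma~\ref{lemck} in the proof of Lemma~\ref{lemusc}; it repeats the same truncation/stability argument inline, using the modified truncated consumption
$\bar c^k_t = \frac{x_k}{x+1/M_1}\frac 1{M_1}\vee\big(\widehat c_t(x,0)/N^{\e_k}_t\big)\wedge M_2$, where the scaling factor $\frac{x_k}{x+1/M_1}$ guarantees $\bar c^k\in\cA(x_k,\e_k)$ for the \emph{varying} initial wealth $x_k$.

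Second, your proposed truncation $\frac1m\vee\widehat c\wedge m$ acts on $\widehat c$ \emph{before} transporting into the $\e$-market. That does not bound $c^{\e,x'}=\frac{x'}{x}\widehat c/N^\e$, since $1/N^\e$ is unbounded; the BSDE for the transported consumption would then still have an unbounded terminal condition and a non-Lipschitz generator, so the stability result of \cite[Theorem~19.1.6]{ElliotCohen} cannot be applied. The truncation must be performed on the transported consumption (and with the initial-wealth rescaling noted above), while the base BSDE for $\widehat c$ is separately approximated by truncating the terminal condition at level $n$ and the generator at level $m$, yielding the double family $Y^{n,m}$; only after all three cutoffs are in place does the Lipschitz stability theorem give the needed $|\bar Y^k_0 - Y^{n,m}_0|<\e'/3$, which combines with the earlier two $\e'/3$ bounds to close the argument. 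The diagonal extraction you anticipate is the right idea, but the objects being truncated and the bookkeeping of the cutoff levels need to be fixed as above.
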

\begin{proof}
 Let us fix $\e'>0$ and let $\widehat c(x,0)$ be the optimizer to \eqref{primalProblem} at $(x,0)$, which belongs to ${\widetilde\cA}(x,0)$, as this results from Assumption \ref{asBaseModel}. Let $(x_k,\e_k)$, $k\in\bN$, be a sequence  which converges to $(x,0)$ and such that 
\be\label{72710}
\lim\limits_{k\to\infty} u(x_k,\e_k) = \liminf\limits_{(x',\e) \to (x,0)} u(x',\e).\ee
For $c = \widehat c(x,0)$, let us 
consider the BSDE \eqref{BSDEY} (which  is related to \eqref{BSDEU} via \eqref{transfY}).  
As, by Assumption \ref{asBaseModel}, $U^c$ is of class D, one can show  (see the discussion in \cite[Remark 2.2]{MX}) that $c\in{\widetilde\cA}(x,0)$. 
Furthermore, as established in the proof of \cite[Proposition 2.2]{Hao17}, {(for $\gamma, \psi >1$)}  
 \eqref{BSDEY} admits a {\it unique} solution $(Y, Z)$, such that $Y$ is continuous, strictly positive, and of class  D,  with $\int_0^T|Z_t|^2 dt <\infty$, $\Pas$. Moreover, $U^c := e^{\delta\theta t}Y_t \frac 1{1-\gamma}$, $t\in[0,T]$, satisfies \eqref{BSDEU} and \eqref{defUEZ}. 
Next, using the approximation procedure as in step 2 of the proof of \cite[Proposition 2.2]{Hao17}, 
 one can show that there exists $n_0\in\bN$, such that \be\label{7232}
|Y^n_0 - Y_0|<\frac {\e'}3, \quad n\geq n_0,
\ee
where $Y^n$ solves
$$Y^n_t = (e^{-\delta \theta T} c_T^{1-\gamma} )\wedge n + \int_t^T F(s, c_s, Y^n_s) ds - \int_t^T Z^n_s dB_s,\quad  t\in[0,T].$$

The latter BSDE admits a solution $(Y^n, Z^n)\in\cS^\infty\times \cM^2$, where using the comparison argument, one can show that $0\leq Y^n \leq n$, and $Y^n = \downarrow \lim\limits_{m\to\infty} Y^{n, m}$, $n\in\bN$, where $Y^{n,m}$ solves
\be\label{7235}
Y^{n,m}_t = (e^{-\delta \theta T} c_T^{1-\gamma} )\wedge n + \int_t^T F^m(s, c_s, Y^{n, m}_s)ds  - \int_t^T Z^{n,m}_s dB_s,\quad  t\in[0,T],
\ee with $F^m(t, c_t, y)  : = \delta \theta e^{-\delta t} (c_t^{1- \frac 1 \psi}\wedge m)(|y| \wedge m)^{1- \frac 1 \theta}.$
Likewise, one can show that  $0\leq Y^{n, m}\leq n$. Therefore, for $m\geq n$, we obtain 
\be\label{7231}\bs
F^m(t, c_t, Y^{n, m}_t)   &= \delta \theta e^{-\delta t} (c_t^{1- \frac 1 \psi}\wedge m)(Y^{n, m}_t \wedge m)^{1- \frac 1 \theta} \\
&= \delta \theta e^{-\delta t} (c_t^{1- \frac 1 \psi}\wedge m)(Y^{n, m}_t)^{1- \frac 1 \theta}.\end{split}
\ee
For every $n\in\bN$, one can show that  $\lim\limits_{m\to\infty} \sup\limits_{t\in[0,T]}|Y_t^n - Y_t^{n, m}| = 0$ in probability $\bP$, and thus, we deduce that there exists $m'(n)\in\bN$, such that 
\be\label{7233}|Y^n_0 - Y^{n, m}_0|<\frac {\e'}{3},\quad n\in\bN, \quad m\geq m'(n).\ee

 It follows from Lemma \ref{lemAdm} that the process $c^{k}= \widehat c(x,0)\frac 1{N^{\e_k}}\in\cA(x, \e_k)$. Next, 
for every $M_1>0, M_2>0$ the process $\bar c^k$ defined as 
$\bar c^k_t := \frac{x_k}{x + \frac 1{M_1}}\frac 1{M_1}\vee c^{k}_t\wedge M_2$,
$t\in[0,T]$, satisfies 
$\bar c^k \in{\widetilde\cA}(x_k, \e_k).$
Now let us consider the sequence of BSDEs 
\be\label{7236}
\bar Y^k_t =  e^{-\delta \theta T} (\bar c^k_T)^{1-\gamma}   +\int_t^T F(s, \bar c^k_s, \bar Y^k_s)ds - \int_t^T \bar Z^k_s dB_s,\quad t\in[0,T],\quad k\in\bN.
\ee
\cite[Theorem 5.1]{CohenElliot2} ensures that there exists a unique solution to BSDE \eqref{7236}, $\(\bar Y^k, \bar Z^k \)\in \cS^2\times \cM^2$. Further, by replacing $F$ with $F^k$ as in the previous step, and using the {comparison for BSDEs ({with Lipschitz generator}) results, see, e.g., \cite[Theorem 2.4]{Pard}}, we deduce that the first component of the solution is in $\cS^\infty$.

Let us consider \eqref{7235} and \eqref{7236}. These are BSDEs with bounded terminal conditions and Lipschitz generators. Therefore, the stability of BSDEs, as in   \cite[Theorem 19.1.6, p. 472]{ElliotCohen}, implies  that, for some $n$ satisfying \eqref{7232} and for $m = n(m)$ satisfying \eqref{7233}, one can first pick $M_1$ and $M_2$ and then  $k_0$, such that \be\label{7238}
|\bar Y^k_0 - Y^{n, m}_0|<\frac {\e'}3, \quad k\geq k_0.
\ee
Comparing \eqref{7232}, \eqref{7233}, and \eqref{7238}, we deduce that 
$$|\bar Y^k_0 - Y_0|<\e', \quad k \geq k_0.$$
Therefore, as $\bar c^k \in{\widetilde\cA}(x_k, \e_k)$, via \eqref{transfY}, we obtain 
$$\liminf\limits_{k\to\infty}u(x_k,\e_k) \geq \liminf\limits_{k\to\infty}\frac{\bar Y^k_0}{1 - \gamma} \geq \frac{ Y_0}{1 - \gamma} - \frac{\e'}{|1 -\gamma|} = u(x,0) - \frac{\e'}{|1 -\gamma|}.$$
Consequently, as $\e'$ is arbitrary,  via \eqref{72710}, we deduce that \eqref{bsdeConv} holds. 


\end{proof}
The next lemma establishes a result similar to Lemma \ref{lemusc} for the dual value function. The proof is similar to the proof of Lemma \ref{lemusc}, {so we only outline the main steps.} 
\begin{Lemma}\label{lemvsc}
Under the conditions of Theorem \ref{mainThm1}, 
we have 
\be\label{bsdeConvDual}
\limsup\limits_{(y',\e) \to (y, 0)} v(y',\e) \leq v(y,0).
\ee
\end{Lemma}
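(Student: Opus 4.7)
The plan is to mirror the argument of Lemma \ref{lemusc}, replacing the primal BSDE \eqref{BSDEU} by the dual BSDE \eqref{defVD} and the primal optimizer $\widehat c(x, 0)$ by the dual optimizer $\widehat D(y, 0)$ supplied by Assumption \ref{asBaseModel}. First I would pick a sequence $(y_k, \e_k) \to (y, 0)$ realizing the $\limsup$ of $v(y', \e)$. By Lemma \ref{lemAdm} (together with Assumption \ref{asPert} and the integrability discussion of Section \ref{secInteg}), the process $D^k := (y_k/y)\,\widehat D(y, 0)\, N^{\e_k}$ lies in $\widetilde\cD(y_k, \e_k)$, so $v(y_k, \e_k) \leq V^{D^k}_0$; it then suffices to show $V^{D^k}_0 \to V^{\widehat D(y,0)}_0 = v(y, 0)$.

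Next, I would transform \eqref{defVD} via a substitution analogous to \eqref{transfY}, namely $Y := e^{-\delta\theta t}(1-\gamma) V^D$, to obtain a BSDE whose generator is monotone in $Y$ (using that $\theta < 0$ when $\gamma, \psi > 1$) and whose terminal datum is $e^{-\delta\theta T}\gamma D_T^{(\gamma-1)/\gamma}$. Following the two-step approximation in the proof of Lemma \ref{lemusc}, I first truncate the terminal condition at level $n$ to produce $Y^n$ with $|Y^n_0 - Y_0| < \e'/3$ for $n \geq n_0$, and then truncate the generator at level $m$ to obtain a Lipschitz driver with bounded terminal; the resulting solution $(Y^{n,m}, Z^{n,m}) \in \cS^\infty \times \cM^2$ exists uniquely by \cite[Theorem 5.1]{CohenElliot2} and satisfies $|Y^{n,m}_0 - Y^n_0| < \e'/3$ for $m \geq m'(n)$. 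Performing the identical truncation scheme on the BSDE driven by $D^k$ produces $\bar Y^{k, n, m}$ lying in the same space.

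Finally, I would apply the classical stability result \cite[Theorem 19.1.6, p.~472]{ElliotCohen} for BSDEs with Lipschitz generator and bounded terminal to the pair $(Y^{n, m}, \bar Y^{k, n, m})$. Since $N^{\e_k} \to 1$ in $(d\kappa \times \bP)$-measure and $y_k \to y$, the terminal data and generators of $\bar Y^{k, n, m}$ converge to those of $Y^{n, m}$, yielding $|\bar Y^{k, n, m}_0 - Y^{n, m}_0| < \e'/3$ for all $k$ large. Combining the three $\e'/3$ estimates gives $|\bar Y^k_0 - Y_0| < \e'$ for $k$ large, which (via the transformation) reads $v(y_k, \e_k) \leq V^{D^k}_0 \leq v(y, 0) + \e'/|1-\gamma|$, and letting $\e' \downarrow 0$ establishes \eqref{bsdeConvDual}. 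The main obstacle will be executing the two-step BSDE truncation in a regime where the terminal condition $\propto D_T^{(\gamma-1)/\gamma}$ is typically \emph{unbounded} (exponent in $(0,1)$ since $\gamma > 1$) and the generator $g(d, v)$ is singular in $d$ near $0$ (since $1 - \psi < 0$), so that the truncation thresholds and the comparison argument keeping $\bar Y^{n, m} \in \cS^\infty$ must be adapted to the opposite sign and growth structure from the primal case in Lemma \ref{lemusc}.
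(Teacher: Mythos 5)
Your overall strategy (pick a sequence realizing the $\limsup$, transport the dual optimizer $\widehat D(y,0)$ into the $\e_k$-th market via $N^{\e_k}$, and run the same two-step truncation plus \cite[Theorem 19.1.6]{ElliotCohen} stability argument as in Lemma \ref{lemusc}) is the same as the paper's, and the paper indeed only says ``applying the approximation procedure entirely similarly to Lemma \ref{lemusc}'' for the BSDE part. However, there is a genuine gap in your very first step: you claim that $D^k := (y_k/y)\,\widehat D(y,0)\,N^{\e_k}$ lies in $\widetilde\cD(y_k,\e_k)$ ``by Lemma \ref{lemAdm} together with Assumption \ref{asPert} and Section \ref{secInteg}''. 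Lemma \ref{lemAdm} only gives $\widehat D(y,0)N^{\e_k}\in\cD(y,\e_k)$; membership in $\widetilde\cD$ additionally requires $\bE\[\int_0^T (D^k_s)^{1-\psi}ds\]<\infty$, and Assumption \ref{asPert} asserts only that \emph{some} element of $\widetilde\cD(1,\e_k)$ exists — it does not say that this particular candidate satisfies the integrability. (The displayed condition in Section \ref{secInteg} involving $\widehat D(y,0)N^\e$ is stated there as a \emph{sufficient} condition for Assumption \ref{asPert}, not a consequence of it; you have the implication backwards.) Without membership in $\widetilde\cD(y_k,\e_k)$ you cannot write $v(y_k,\e_k)\le V_0^{D^k}$, and $V^{D^k}$ need not even be well defined.

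The paper closes this gap by a convex-combination construction: it takes $\widetilde D^k\in\widetilde\cD(1,\e_k)$ from Assumption \ref{asPert} and sets
\be\nn
D^k = \bigl(1-\lambda_k\bigr)\tfrac{y_k}{y}\,\widehat D(y,0)N^{\e_k} + \lambda_k\, y_k\,\widetilde D^k,
\qquad \lambda_k := -1\vee\e_k\wedge 1 .
\ee
Convexity of $\cD(y_k,\e_k)$ gives $D^k\in\cD(y_k,\e_k)$, and since $1-\psi<0$ the map $d\mapsto d^{1-\psi}$ is decreasing, so $(D^k)^{1-\psi}\le (\lambda_k y_k)^{1-\psi}(\widetilde D^k)^{1-\psi}$, which is $(d\kappa\times\bP)$-integrable by the choice of $\widetilde D^k$; hence $D^k\in\widetilde\cD(y_k,\e_k)$. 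Because $\lambda_k\to 0$ and $N^{\e_k}\to 1$ in measure, $D^k\to\widehat D(y,0)$ in measure $(d\kappa\times\bP)$, and the approximation/stability machinery can then be applied to this sequence exactly as you describe. With that correction your argument goes through; your closing remark about the unbounded terminal condition $\propto D_T^{(\gamma-1)/\gamma}$ and the singularity of $g$ near $d=0$ is a fair description of what the truncations must handle, and is consistent with the (terse) treatment in the paper.
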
 
\begin{proof}{
Let us consider a sequence $(y_k, \e_k)$, $k\in\bN$, convergent to $(y, 0)$ and such that 
$$\lim\limits_{k\to\infty}v(y_k, \e_k) = \limsup\limits_{(y', \e)\to (y, 0)}v(y, 0).$$
By Assumption \ref{asPert}, for every $k\in\bN$, there  $\widetilde D^k\in\widetilde \cD(1, \e_k)$, that is, $\widetilde D^k$, such that  
\be\label{443}
\bE\[\int_0^T (\widetilde D^k_s)^{1-\psi}ds\]<\infty,\quad k\in\bN.
\ee
Let us set 
\be\nn
D^k = (1 - (-1 \vee\e_k \wedge1))\frac {y_k}y \widehat D(y, 0)N^{\e_k} + (-1 \vee\e_k \wedge1)y_k\widetilde D^k, \quad k\in\bN.
\ee
Then, as $N^{\e_k} \to 1$, in measure $(d\kappa\times\bP)$, by the assumption of Theorem \ref{mainThm1}, we deduce that $D^k\to \widehat D(y, 0)$, in measure $(d\kappa\times\bP)$. Moreover, it follows from \eqref{443}, and since $1 - \psi<0$, that $D^k\in \widetilde \cD(y_k, \e_k)$, $k\in\bN$. Next, applying the approximation procedure entirely similarly to Lemma \ref{lemusc}, we obtain the assertion of the lemma.}
\end{proof}

We now show the concavity of $U^c_0$ in $c$ in the following sense, which is closely related to the notion of strong concavity.
\begin{Lemma}\label{lemConv}
Let us suppose that $c'$ and $c''$ are in $\bigcup\limits_{(x,\e)\in(0,\infty)\times\(-\e_0, \e_0\)}{\widetilde\cA}(x, \e)$  and are such that  
\be\label{6171}
(d\kappa\times\bP)\[ |c' - c''|\geq \delta, ~c'+c''\leq \frac 1\delta\]\geq \delta, \quad for~some~\delta>0.
\ee Further, let us suppose that for a given constant  $\lambda \in(0,1)$, we have
 $$c:= \lambda c' + (1-\lambda)c''\in \bigcup\limits_{(x,\e)\in(0,\infty)\times\(-\e_0, \e_0\)}{\widetilde\cA}(x, \e).$$ 

Then, there exists a constant $\bar\eta  >0$, such that  
\be\label{6172}\lambda U^{c'}_0 + (1-\lambda) U^{c''}_0 +\bar\eta\leq U^{c}_0.\ee   
\end{Lemma}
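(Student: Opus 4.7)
The plan is to translate the pointwise strict concavity of $U_T$ and of the first argument of the aggregator $f$ into a quantitative gap at $t=0$ through a BSDE comparison. First, introduce $\tilde U_t := \lambda U^{c'}_t + (1-\lambda) U^{c''}_t$. By linearity of conditional expectation applied to \eqref{defUEZ}, $\tilde U$ is the (class D) solution of the BSDE with terminal condition $\tilde\xi_T := \lambda U_T(c'_T) + (1-\lambda)U_T(c''_T)$ and generator $\tilde g_s := \lambda f(c'_s, U^{c'}_s) + (1-\lambda) f(c''_s, U^{c''}_s)$. Comparing with the BSDE for $U^c$, the difference $\Delta_t := U^c_t - \tilde U_t$ satisfies
$$
\Delta_t = \mathbb{E}_t\!\left[\Xi_T + \int_t^T \bigl(\alpha_s \Delta_s + R_s\bigr) ds\right],
$$
where $\Xi_T := U_T(c_T) - \tilde\xi_T$, the coefficient $\alpha_s := \partial_u f(c_s, \bar u_s)$ comes from a mean-value expansion of $f(c_s, U^c_s) - f(c_s, \tilde U_s)$ between $U^c_s$ and $\tilde U_s$, and $R_s := f(c_s, \tilde U_s) - \tilde g_s$ is the remaining aggregator gap.

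Next, I would produce strictly positive contributions from $\Xi_T$ and $R_s$. Since $\gamma>1$, the bequest $U_T(c)=c^{1-\gamma}/(1-\gamma)$ is strictly concave; hence on $\{|c'_T-c''_T|\geq\delta,\ c'_T+c''_T\leq 1/\delta\}$ the modulus-of-strict-concavity argument yields $\Xi_T \geq h_1(\delta,\lambda,\gamma)>0$ deterministically. For $R_s$, the core observation is that $h(c):=c^{1-1/\psi}/(1-1/\psi)$ is strictly concave for $\psi>1$, so on the event in \eqref{6171} one has $h(c_s)-\lambda h(c'_s)-(1-\lambda)h(c''_s)\geq h_2(\delta,\lambda,\psi)>0$; multiplied by the strictly positive factor $((1-\gamma)\tilde U_s)^{1-1/\theta}$ appearing in \eqref{deff}, this delivers a strictly positive lower bound for the contribution of $R_s$ restricted to the $c$-fluctuation, once the remaining terms arising from replacing $\tilde U_s$ by $U^{c'}_s, U^{c''}_s$ in the $u$-slot of $f$ are controlled. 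Those remaining terms are handled by invoking the known concavity of $c\mapsto U^c_0$ for $\gamma>1,\psi>1$ (established for the Epstein--Zin utility in the present parameter regime via standard monotone iteration, cf.\ the discussion after \cite[Proposition~2.2]{Hao17}) to conclude that their aggregate contribution to $R_s$ is non-negative, keeping the strict gap intact.

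With $\Xi_T \geq 0$ and $R_s \geq 0$ pointwise, and both strictly positive on sets of positive $\bP$- respectively $(d\kappa\times \bP)$-measure, the linear BSDE representation with adjoint weight $\mathcal{E}_t := \exp\!\left(\int_0^t \alpha_s\,ds\right)$ gives
$$
\Delta_0 \geq \expec\!\left[\mathcal{E}_T\,\Xi_T + \int_0^T \mathcal{E}_s R_s\, ds\right] =: \bar\eta > 0,
$$
which is the desired inequality \eqref{6172}.

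The main obstacle is that $f$ is not Lipschitz in $u$ and that $U^c$, $U^{c'}$, $U^{c''}$ need not be bounded, so $\alpha_s = \partial_u f(c_s,\bar u_s)$ may be unbounded and $\mathcal{E}_s$ may not be integrable in any trivial way. I would overcome this by localization: stop at $\tau_n := \inf\{t:\max(|U^c_t|,|U^{c'}_t|,|U^{c''}_t|)\vee c_t\vee c'_t \vee c''_t \geq n\ \text{or}\ \min(c_t,c'_t,c''_t,|U^c_t|,|U^{c'}_t|,|U^{c''}_t|)\leq 1/n\}$, derive the linear comparison on $[0,\tau_n]$ (where \cite[Theorem 2.4]{Pard} applies to the now Lipschitz, bounded-terminal localized BSDEs), and pass to the limit using the class D property of $U^{c}, U^{c'}, U^{c''}$ (guaranteed by $c, c', c'' \in \widetilde\cA$ and Lemma \ref{lemConvA}) together with monotone/dominated convergence to recover the bound on the full interval; a routine reduction also allows one to first assume the consumptions are truncated above and away from $0$, so that the event in \eqref{6171} is not eroded by the localization.
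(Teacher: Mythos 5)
The core algebraic claim in your third paragraph --- that the generator residual $R_s := f(c_s,\tilde U_s) - \lambda f(c'_s,U^{c'}_s) - (1-\lambda)f(c''_s,U^{c''}_s)$ is nonnegative, with a strict gap on the event in \eqref{6171} --- is false, and the surrounding BSDE machinery cannot repair it. In the regime $\gamma>1$, $\psi>1$, the aggregator $f$ of \eqref{deff} is \emph{not} jointly concave in $(c,u)$. You correctly note that $c\mapsto f(c,u)$ is concave (the $c^{1-1/\psi}$ factor with $1-1/\psi\in(0,1)$), but $u\mapsto f(c,u)$ is strictly \emph{convex}: with $v=(1-\gamma)u>0$ and $\theta<0$ one has
\[
\partial^2_u f(c,u) \;=\; \delta\,\frac{c^{1-1/\psi}}{1-1/\psi}\,\Big(1-\tfrac1\theta\Big)\Big(-\tfrac1\theta\Big)(1-\gamma)^2\,v^{-1/\theta-1}\;>\;0,
\]
since every factor is positive. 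So the ``remaining terms arising from replacing $\tilde U_s$ by $U^{c'}_s,U^{c''}_s$ in the $u$-slot'' are not merely something ``to be controlled'' --- they carry the wrong sign, and the pointwise inequality $R_s\ge 0$ simply does not hold. Your attempted patch, ``invoking the known concavity of $c\mapsto U^c_0$,'' is circular (that concavity is essentially the non-strict form of this very lemma and is itself proved in the literature by the transformation you omit) and is also a category error: it is a statement about a time-$0$ value, not a pointwise statement about the sign of a generator residual.

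The paper's proof resolves exactly this obstruction by one more change of variables that you are missing: after passing to $Y=e^{-\delta\theta\cdot}(1-\gamma)U^c$ as in \eqref{transfY}, it sets $\bY := \tfrac{1}{p}Y^{1/\theta}$ with $p=1-1/\psi$. Under this power transform the BSDE for $\bY$, see \eqref{eqnbY}, has generator $\delta e^{-\delta s} c^p/p + \tfrac12(\theta-1)\bZ_s^2/\bY_s$, which \emph{is} jointly concave in $(c,\bY,\bZ)$ for $\theta<1$. The convex combination is then taken at the $\bY$-level, $\d\bY=\lambda\bY'+(1-\lambda)\bY''$; the strict gap in \eqref{7281} and \eqref{8141} comes from strict concavity of $x\mapsto x^p$ on the event in \eqref{6171}; and one passes back to $Y$ and $U$ only at the end, using the convexity of $y\mapsto y^\theta$ for $\theta<0$ (inequality \eqref{7287}, i.e.\ \cite[eq.\ (A.6)]{Hao17}) and then \eqref{transfY}. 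Your bequest estimate $\Xi_T\ge h_1>0$ by strict concavity of $U_T$ parallels \eqref{7282} and is fine, and your localization instinct to make the comparison lemma applicable is the right technical flavor, but the proof cannot proceed by linearizing $f$ in $u$ at the $U$-level: the joint concavity needed for the quantitative concavity gap only appears after the power transform $Y\mapsto Y^{1/\theta}$.
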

\begin{proof}
Let us show that  
\be\label{6181}\lambda Y^{c'}_0 + (1-\lambda) Y^{c''}_0 -\eta_0\leq Y^{c}_0,
\ee
where $Y$'s satisfy \eqref{BSDEY} with respective $c$'s and $\eta_0$ is some positive constant. 
As the generator of $Y$ is not {\it jointly} concave in $(c, Y, Z)$, with $p:= 1- \frac 1 \psi$, 
 following \cite{Hao17}, one can set  
$$\(\bY, \bZ\): = \frac {1}{p}\(Y^{\frac 1\theta},\frac 1\theta Y^{\frac 1\theta - 1} Z\).$$
Then $\bY$ satisfies 
\be\label{eqnbY}
\bY_t = e^{-\delta T}\frac {c^p_T}{p}+ \int_t^T \(\delta e^{-\delta s}\frac {c_s^p}{p} + \frac 12 (\theta - 1)\frac {\bZ^2_s}{\bY_s}\)ds - \int_t^T\bZ_s dB_s,\quad t\in[0,T],
\ee
where the generator is {\it jointly} concave in $(c, \bY, \bZ)$ when $\theta<1$.

For $\bY'$ and $ \bY''$, let $\d \bY := \lambda \bY' + (1-\lambda) \bY''$, $\d c: = \lambda c' + (1-\lambda) c''$, and $\d \bZ: = \lambda \bZ' + (1-\lambda) \bZ''$. We observe that 
$$\Delta Y := (p\d\bY)^\theta \quad and \quad \Delta Z:= (1-\gamma)\(p \d\bY\)^{\theta - 1}\d\bZ,$$
satisfy
\be\nn\bs
\Delta Y_t = (p\d\bY_T)^\theta &+ \int_t^T\(\delta \theta e^{-\delta s}(\Delta c_s)^{p} - (1-\gamma)A_t\)\Delta Y_s^{1-\frac 1\theta} ds\\
& -\int_t^T\d Z_sdB_s,\quad t\in[0,T],
\end{split}\ee
where 
\be\label{7281}\bs
A_t = &\frac{\delta e^{-\delta t}}{p} \((\d c_t)^p -\lambda (c'_t)^p - (1- \lambda)(c''_t)^p\) 
\\
&+ \tfrac 12 (\theta - 1)\(\frac {\d\bZ^2_t}{\d\bY_t}  - \lambda \frac {{\bZ'}^2_t}{\bY'_t} - (1 - \lambda) \frac {{\bZ''}^2_t}{\bY''_t} \)\geq 0, 
\end{split}
\ee
as both terms on the right-hand side are nonnegative by the joint concavity of the generator to \eqref{eqnbY}. 
Additionally, the function $x\to x^p$, $x>0$ is strictly concave. Therefore, on $\{c'_t + c''_t\leq \frac 1\e ,|c'_t - c''_t|\geq\e\}$, we have 
 $\((\d c_t)^p -\lambda (c'_t)^p - (1- \lambda)(c''_t)^p\)\geq \delta_1$, for some constant $\delta_1>0$, 
which depends only on $\e$ and $\lambda$ in the statement of the lemma (also on $p=1- \frac 1 \psi$, but $\psi$ is fixed throughout the paper). 
 Similarly, 
we  
obtain  
\be\nn
p e^{\delta T}\D \bY_T \leq (\D c_T)^p - \delta_11_{\{c'_T + c''_T \leq \frac 1\e , |c'_T - c''_T|\geq\e\}}.
\ee
Therefore,  for some constant $\delta_2>0$, we have 
\be\label{7282}
\D Y_T
\geq e^{-\delta\theta T} (\D c_T)^{1-\gamma} + \delta_21_{\{c'_T + c''_T \leq \frac 1\e ,|c'_T - c''_T|\geq\e\}}.
\ee
  It follows from 
\eqref{7281} and \eqref{7282},   that $(\D Y, \D Z)$ is a supersolution to 
$$ Y^{\D c}_t = e^{-\delta\theta T}(\D c_T)^{1-\gamma} + \int_t^TF(s, \D c_s,Y^{\D c}_s) ds -\int_t^T  Z^{\D c}_sdB_s,\quad t\in[0,T].$$
Setting $\xi_t:=- (1-\gamma)A_t\Delta Y_t^{1-\frac 1\theta}$, $t\in[0,T)$, $\xi_T := \D Y_T - Y^{\D c}_T
$, we observe that 
 $\xi_t\geq 0$, $t\in[0,T]$,  and, moreover, for some constant $\widetilde \delta_1>0$, we have
\be\label{8141}\bs
\xi_t&\geq \widetilde \delta_1\Delta Y_t^{1-\frac 1\theta}1_{\{c'_t + c''_t \leq \frac 1\e ,|c'_t - c''_t|\geq\e\}}, \quad t\in[0,T), \\
 \xi_T &\geq\delta_21_{\{c'_T + c''_T \leq \frac 1\e ,|c'_T - c''_T|\geq\e\}}.
\end{split}
\ee
We stress here that $\widetilde \delta_1$ and $\delta_2$ depend only on $\lambda$ and $\e$ in the statement of the lemma.

Further, let us define $\eta_t := \D Y_t - Y^{\D c}_t$, and $\zeta_t: = \D Z_t - Z^{\D c}_t$, $t\in[0,T]$, we deduce that 
\be\label{7283}\bs
\eta_t = \xi_T + \int_t^T\left\{\(\delta \theta e^{-\delta s}(\Delta c_s)^{p} - (1-\gamma)A_t\)\Delta Y_s^{1-\frac 1\theta} -F(s, \D c_s,Y^{\D c}_s) \right\}ds \\
- \int_t^T \zeta_s dB_s.
\end{split}
\ee
Let us rewrite the latter generator as 
\be\nn\bs
&\(\delta \theta e^{-\delta s}(\Delta c_s)^{p} - (1-\gamma)A_t\)\Delta Y_s^{1-\frac 1\theta} -F(s, \D c_s,Y^{\D c}_s) \\
=&- (1-\gamma)A_t\Delta Y_s^{1-\frac 1\theta} + F(s, \D c_s,\D Y_s)  -F(s, \D c_s,Y^{\D c}_s).
\end{split}
\ee
Setting $\alpha_t := \frac {F(t, \D c_s,\D Y_t)  -F(t, \D c_t,Y^{\D c}_t)}{\eta_s}1_{\{\eta_t\neq 0\}}$,  $t\in[0,T]$, we can rewrite \eqref{7283} as 
$$
\eta_t = \xi_T + \int_t^T (\alpha_s \eta_s + \xi_s)ds - \int_t^T \zeta_s dB_s,
$$
With $\Gamma_t := \exp\( \int_0^t \alpha_sds\)$, we get 
\be\nn
\eta_t = \frac 1 {\Gamma_t}\bE_t\[ \Gamma_T\xi_T + \int_t^T \xi_s \Gamma_s ds\].
\ee
In particular, at $t= 0$, we get 
\be\label{7285}
\eta_0 =\bE\[ \Gamma_T\xi_T + \int_0^T \xi_s \Gamma_s ds\] =\bE\[  (\Gamma \xi)\cdot \kappa_T\].
\ee
As both  $\D Y$ and $Y^{\D c}$ are finite-valued, $\Gamma >0$. Next, as 
from \eqref{8141} and \eqref{7285}, we deduce the strict positivity of $\bar U_0$ (by the strict comparison 
  and  \eqref{6171}) that  
  \be\label{7286}
\D Y_0 - Y^{\D c}_0 = \eta_0>0 .
\ee

 Moreover, as in \cite[equation (A.6), p. 247]{Hao17}, we get 
 \be\label{7287}
 \D Y_t\leq \lambda Y^{c'}_t + (1-\lambda)Y^{c''}_t,\quad t\in[0,T],\quad \Pas.
 \ee
 Combining \eqref{7286} and \eqref{7287}, we deduce that 
 $$\lambda Y^{c'}_0 + (1-\lambda)Y^{c''}_0 \geq \D Y_0\geq Y^{\D c}_0 +  \eta_0,$$
 and thus \eqref{6181} holds, which,  via  \eqref{transfY} and \cite[Proposition 2.2]{Hao17}, implies \eqref{6172}, where $\bar \eta = \frac{\eta_0}{\gamma -1}$.
\end{proof}
We will need the following technical lemma.
\begin{Lemma}
\label{lemck}Let $x>0$ be fixed. 
Under the conditions of Theorem \ref{mainThm2}, let $\e_k$, $k\in\bN$, be a sequence or real numbers converging to zero. 
Let us set 
\be\label{defck}
c^{k, \delta', M}_t:= \frac {x}{x+\delta'}\delta'\vee\widehat c_t(x,0)\frac 1{N_t^{\e_k}}\wedge {M}, \quad t\in[0,T],\quad k\in\bN, \quad \delta'>0,\quad M>0.
\ee
Then, for every $k\in\bN$, there exist $\delta'(k)$, $M(k)$, such that for 
\be\label{8111}
\widetilde c^k := c^{k, \delta'(k), M(k)} , \quad k\in\bN,\ee 
the associated solutions to \eqref{BSDEY} satisfy 
$$\lim\limits_{k\to\infty}Y^{\widetilde c^k} = Y^{\widehat c(x,0)},\quad {ucp}.$$
\end{Lemma}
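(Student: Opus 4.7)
The plan is a two-stage approximation followed by a diagonal selection. I first introduce the intermediate truncation, independent of $k$,
\begin{equation*}
c^{\delta', M}(x,0)_t := \frac{x}{x+\delta'}\delta' \vee \widehat c_t(x,0) \wedge M, \quad t\in[0,T],
\end{equation*}
and split
\begin{equation*}
\sup_{t\in[0,T]}\bigl|Y^{c^{k,\delta',M}}_t - Y^{\widehat c(x,0)}_t\bigr| \leq \sup_{t\in[0,T]}\bigl|Y^{c^{k,\delta',M}}_t - Y^{c^{\delta',M}(x,0)}_t\bigr| + \sup_{t\in[0,T]}\bigl|Y^{c^{\delta',M}(x,0)}_t - Y^{\widehat c(x,0)}_t\bigr|,
\end{equation*}
so that choosing $\delta'(k) \downarrow 0$ and $M(k) \uparrow \infty$ slowly enough will give the ucp convergence once each term on the right is controlled.

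For the first term, with $(\delta', M)$ fixed, $c^{k,\delta',M}$ takes values in the bounded interval $[\tfrac{x\delta'}{x+\delta'}, M]$ and converges in $(d\kappa \times \bP)$-measure to $c^{\delta',M}(x,0)$ since $N^{\e_k} \to 1$ in measure by the hypothesis of Theorem~\ref{mainThm1}. The bounded terminal data, together with the $\cS^\infty$ bounds on $Y$ obtained by comparison (exactly as deployed for $Y^{n,m}$ in the proof of Lemma~\ref{lemusc}), confine both processes to a compact interval on which $F(t, c, y) = \delta\theta e^{-\delta t} c^{1-1/\psi} y^{1-1/\theta}$ is Lipschitz in $(c, y)$ uniformly in $t$. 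Classical BSDE stability, as in \cite[Theorem~19.1.6]{ElliotCohen}, then gives $Y^{c^{k,\delta',M}} \to Y^{c^{\delta',M}(x,0)}$ in $\cS^2$, hence in ucp, as $k \to \infty$.

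The second term is the main obstacle, since neither the terminal data nor the generator is uniformly integrable along the truncation $c^{\delta',M}(x,0) \to \widehat c(x,0)$: the lower bound $\tfrac{x\delta'}{x+\delta'}$ degenerates to $0$ and the upper bound $M$ blows up. To treat it I would adapt the two-layer monotone approximation from \cite[Proposition~2.2]{Hao17} already employed in Lemma~\ref{lemusc}: first cap the terminal condition at level $n$, then Lipschitzize the generator to $F^m$, and send $m \to \infty$ and then $n \to \infty$ along monotone comparison. The hypotheses of Theorem~\ref{mainThm2}, namely that $\esssup_{(x,\e) \in B_{\e'}(0,0)} \widehat c(x,\e)$ and $\essinf_{(x,\e) \in B_{\e'}(0,0)} U^{\widehat c(x,\e)}$ both lie in $\bL^0(d\kappa \times \bP)$, supply $(d\kappa \times \bP)$-a.e.\ finiteness of $\widehat c(x,0)$ and of $U^{\widehat c(x,0)}$, ensuring pointwise convergence of the truncations; the class D property of $U^{\widehat c(x,0)}$ provided by Assumption~\ref{asBaseModel} then delivers the uniform integrability needed to upgrade this to uniform convergence on $[0,T]$, via localization along stopping times in the spirit of \cite{BriandHu}. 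With both pieces in place, a standard diagonal selection of $\delta'(k)$ and $M(k)$ yields the claim.
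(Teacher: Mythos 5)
Your proposal is correct and rests on the same three ingredients as the paper's proof --- the truncation/Lipschitzization of the data, the Lipschitz BSDE stability theorem of \cite[Theorem 19.1.6]{ElliotCohen}, and a diagonal selection --- but it groups them differently. The paper approximates $Y^{\widehat c(x,0)}$ by $Y^{n}$ (terminal condition capped at $n$) and then by $Y^{n,m}$ (generator replaced by the Lipschitz $F^m$), both still driven by $\widehat c(x,0)$, and performs a \emph{single} stability comparison between $Y^{n,m}$ and the solution driven by $c^{k,\delta',M}$, choosing $M_1, M_2$ (your $\delta', M$) only at that last stage. You instead insert the intermediate process $Y^{c^{\delta',M}(x,0)}$ driven by the truncated consumption, which cleanly separates the ``$k\to\infty$ at fixed truncation'' limit (where everything is bounded and genuinely Lipschitz, and your argument is complete) from the ``remove the truncation'' limit. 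The one place where your sketch is looser than the paper's is that second limit: the two-layer scheme of \cite[Proposition 2.2]{Hao17} truncates the terminal condition and the generator, not the consumption, so to conclude $Y^{c^{\delta',M}(x,0)}\to Y^{\widehat c(x,0)}$ you still need one more Lipschitz-stability comparison linking $Y^{c^{\delta',M}(x,0)}$ to the $Y^{n,m}$'s (or a two-stage monotone argument in $\delta'$ and $M$ separately, since the joint truncation of $c$ is not monotone); this is precisely the comparison the paper carries out directly against $\bar Y^{c^{k,\delta',M}}$. Once that step is made explicit, the two proofs are equivalent, and your version arguably reads more transparently because the $k$-dependence and the truncation-dependence are disentangled.
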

\begin{proof}
First, we observe that  ${\widetilde c^k}\in\cA(x, \e_k)$, $k\in\bN$.   
 Fixing an $\e'>0$, {utilizing the argument from the proof of \cite[Proposition 2.2]{Hao17}}, one can first show that there exists $n'\in\bN$, such that \be\label{817}
\bP\[\sup\limits_{t\in[0,T]}|Y^n_t - Y^{\widehat c(x,0)}_t|\geq \frac{\e'}{3}\]<\frac {\e'}3,\quad n\geq n',\ee 
where $Y^n$ is the first component of the solution to 
$$Y^n_t = (e^{-\delta \theta T} \(\widehat c_T(x,0)\)^{1-\gamma} )\wedge n + \int_t^T F(s, \widehat c_s(x,0), Y^n_s) ds - \int_t^T Z^n_s dB_s,~t\in[0,T].$$
Further {following the proof of \cite[Proposition 2.2]{Hao17}}, one can show that  $(Y^n, Z^n)\in\cS^\infty\times \cM^2$ to the BSDE above exists and is unique. Furthermore, via the comparison result, see, e.g., \cite[Theorem 2.4]{Pard}, we deduce that $0\leq Y^n \leq n$, and $Y^n = \downarrow \lim\limits_{m\to\infty} Y^{n, m}$, $m\in\bN$, where $Y^{n,m}$ solves
\be\label{813}
\bs
Y^{n,m}_t = (e^{-\delta \theta T} \(\widehat c_T(x,0)\)^{1-\gamma} )\wedge n &+ \int_t^T F^m(s, \widehat c_s(x,0), Y^{n, m}_s)ds  \\
&- \int_t^T Z^{n,m}_s dB_s,\quad  t\in[0,T],
\end{split}
\ee where $F^m(t, c_t, y)  : = \delta \theta e^{-\delta t} (c_t^{1- \frac 1 \psi}\wedge m)(|y| \wedge m)^{1- \frac 1 \theta}.$
Here, by comparison, we have  $0\leq Y^{n, m}\leq n$. As a result, for $m\geq n$, we get 
\be\label{812}\bs
F^m(t, c_t, Y^{n, m}_t)   &= \delta \theta e^{-\delta t} (c_t^{1- \frac 1 \psi}\wedge m)(Y^{n, m}_t \wedge m)^{1- \frac 1 \theta} \\
&= \delta \theta e^{-\delta t} (c_t^{1- \frac 1 \psi}\wedge m)(Y^{n, m}_t)^{1- \frac 1 \theta}.
\end{split}
\ee
Further, as we can show that, for every $n\in\bN$, we have $\lim\limits_{m\to\infty} \sup\limits_{t\in[0,T]}|Y_t^n - Y_t^{n, m}| = 0$ in probability $\bP$. Therefore, we conclude that there exists $m'(n)$, such that 
\be\label{811} \bP \[\sup\limits_{t\in[0,T]}|Y_t^n - Y_t^{n, m}|\geq \frac {\e'}{3}\]\leq \frac {\e'}{3},\quad m\geq m'(n).\ee

Next, for  $c^{k, \delta', M}$ given by \eqref{defck}, $k\in\bN$, $\delta'>0$, and $M>0$, 
let us consider the following family of BSDEs 
\be\label{815}\bs
\bar Y^{c^{k, \delta', M}}_t =&  e^{-\delta \theta T} (  c^{k, \delta', M}_T)^{1-\gamma}   +\int_t^T F\(s,   c^{k, \delta', M}_s, \bar Y^{c^{k, \delta', M}}_s\)ds \\
&- \int_t^T \bar Z^{k, \delta', M}_s dB_s,\quad t\in[0,T],\quad k\in\bN,\quad \delta'>0,\quad M>0.
\end{split}
\ee
By 
\cite[Theorem 5.1]{CohenElliot2}, for every choice of ${k, \delta', M}$, there exists a unique solution to \eqref{815},  $\(\bar Y^{c^{k, \delta', M}}, \bar Z^{k, \delta', M} \)\in \cS^2\times \cM^2$. Further, by replacing $F$ with $F^k$ as in \eqref{812}, and using the comparison for BSDEs 
results as in \cite[Theorem 2.4]{Pard}, we deduce that the first component of the solution is in $\cS^\infty$.

Let us consider \eqref{813} and \eqref{815}. These are BSDEs with bounded terminal conditions and Lipschitz generators. Therefore,  for a given $n$ satisfying \eqref{817} and $m$ satisfying \eqref{813},  \cite[Theorem 19.1.6, p. 472]{ElliotCohen}) allows to pick  $\delta'$ and $M$ and then $k_0$, such that \be\label{816}
 \bP \[\sup\limits_{t\in[0,T]}|\bar Y_t^{c^{k, \delta', M}} - Y_t^{n, m}|\geq \frac {\e'}{3}\]<\frac {\e'}3, \quad k\geq k_0.
\ee
Comparing \eqref{817}, \eqref{811}, and \eqref{816}, we deduce that 
$$ \bP \[\sup\limits_{t\in[0,T]}|\bar Y_t^{c^{k, \delta', M}} - Y_t^{\widehat c(x,0)}|\geq   {\e'} \]<  {\e'} , \quad k\geq k_0.$$
As $\e'$ is arbitrary,  we deduce that there exists $\widetilde c^k$, $k\in\bN$, as in \eqref{8111}, for which the assertion of this lemma  holds. 
\end{proof} 
 
\subsection{Proofs of the main theorems}
\begin{proof}[Proof of Theorem \ref{mainThm1}]
First, we observe that \eqref{finValueuv} follows from Lemma \ref{lemFinValue}. Next, from Lemma \ref{lemusc}, we get
\be\label{72312}
\liminf\limits_{(x',\e)\to (x,0)}u(x',\e)\geq u(x,0).
 \ee
 Applying Lemma \ref{lemvsc}, we obtain 
\be\label{72314}
v(y, 0) + xy \geq \limsup\limits_{(y',\e)\to (y, 0)} (v(y',\e) + xy').
\ee
Now, 
using \cite[Theorem 2.7]{MX}, we have 
\be\label{72311}
\limsup\limits_{(y',\e)\to (y, 0)} (v(y', \e) + x'y') \geq \limsup\limits_{\e\to 0}u(x',\e),\quad x'>0.
\ee
From the assumption of the theorem (equation \eqref{conjugacy}), we deduce  
\be\label{72313}
u(x,0) = v(y, 0) + xy.
\ee
Combining \eqref{72312}, \eqref{72314}, \eqref{72311}, and \eqref{72313},  we conclude 
\be\nn\bs
&u(x,0) \leq \liminf\limits_{(x',\e)\to (x,0)}u(x', \e)\leq \limsup\limits_{(x',\e)\to (x,0)}u(x', \e)\\
&\leq \limsup\limits_{(x',\e)\to (x,0)}\(v(y, \e) + x'y\) \leq v(y,0) + xy = u(x,0).
\end{split}
\ee
Therefore, all inequalities above are equalities, and we get  
\eqref{convergenceu}.
Next, from \eqref{72313}, \eqref{72314}, \eqref{72311}, and  \eqref{72312}, we obtain 
\be\nn\bs
&u(x,0) = v(y, 0) + xy \geq \limsup\limits_{(y',\e) \to (y,0)}
\(v(y', \e) + xy'\) \\
&\geq \liminf\limits_{(y',\e) \to (y,0)}
\(v(y', \e) + xy'\)\geq \liminf\limits_{(y',\e) \to (y,0)} u(x,\e)\geq u(x,0).
\end{split}
\ee
Therefore, all inequalities above are actually equalities. This implies  
\eqref{convergencev}.

Finally, the existence and uniqueness of the optimizers 
follows from Lemma \ref{lemConv} and convexity and closedness in $\bL^0(d\kappa\times\bP)$ of the set ${\widetilde\cA}(x,\e)$, $(x,\e)\in(0,\infty)\times(-\e_0,\e_0)$, note that 
the convexity of ${\widetilde\cA}(x,\e)$, $(x,\e)\in(0,\infty)\times(-\e_0,\e_0)$, follows from Assumption \ref{asPert} and Lemma \ref{lemConvA}. 

\end{proof}

\begin{proof}[Proof of Theorem \ref{mainThm2}] {\it Step 1.}
Assume by contradiction that the assertion of this theorem, that is \eqref{convOpt},  fails. Then, there exists $\delta>0$, such that 
$$\limsup\limits_{n\to\infty} (d\kappa\times\bP)\[|\widehat c(x^n, \e^n) - \widehat c(x,0) |>\delta\]>\delta.$$
As $\frac 1{N^{\e_n}}$, $n\in\bN$, converges to $1$, in measure $(d\kappa\times \bP)$, consequently  $\frac 1{N^{\e_n}}$, $n\in\bN$, is bounded in $\bL^0(d\kappa\times \bP)$. Next, following  \eqref{noArb} and the argument in \cite[Proposition 4.2]{Mostovyi2015}, one can see that  
 the set $\cA(1,0)$ is bounded in $\bL^0(d\kappa\times\bP)$. Therefore, since $\widehat c(x^n, \e^n)\frac 1{N^{\e^n}}\in\cA(x^n, 0)$, by possibly passing to smaller $\delta$, we deduce that 
$$\limsup\limits_{n\to\infty} (d\kappa\times\bP)\[ \left|\widehat c(x^n, \e^n)- \widehat c(x,0)\frac 1{N^{\e^n}} \right|\geq\delta, ~~\widehat c(x^n, \e^n) + \widehat c(x,0)\frac 1{N^{\e^n}}\leq\frac 1\delta \]\geq\delta.$$
With $\widetilde c^k$, $k\in\bN$, as in \eqref{8111} (in Lemma \ref{lemck}), by passing to even smaller $\delta$, we get
\be\label{8141}
\limsup\limits_{n\to\infty} (d\kappa\times\bP)\[\left|\widehat c(x^n, \e^n)-\widetilde  c^n \right|\geq\delta, ~~\widehat c(x^n, \e^n) + \widetilde c^n\leq\frac 1\delta\]\geq\delta.
\ee
Let us set \be\label{8152}
\bar c^n:=\frac 12\left|\widehat c(x^n, \e^n) +\widetilde c^n\right|\in\cA\(\frac {x^n + x}2, \e^n\),\quad n\in\bN.
\ee
Furthermore, one can show that $\bar c^n\in{\widetilde\cA}\(\frac {x^n + x}2, \e^n\)$, as for every $t\in[0,T]$, we have 
\be\nn\bs\(\bar c^n_t\)^{1-\frac 1\psi} &=\(\frac 12\left|\widehat c_t(x^n, \e^n) +\widetilde c^n_t\right|\)^{1 - \frac 1\psi}\\ &\leq \max\( \widehat c_t(x^n, \e^n),   \widetilde c^n_t\)^{1 - \frac 1\psi}\leq \(\widehat c_t(x^n, \e^n)\)^{1 - \frac 1\psi} + (\widetilde c^n_t)^{1 - \frac 1\psi},
\end{split}\ee
and at maturity, we have 
$$\(\bar c^n_T\)^{1-\gamma}= \(\frac 12\left|\widehat c_T(x^n, \e^n) +\widetilde c^n_T\right|\)^{1-\gamma} \leq \(\frac 12\frac{x}{x + \e^n}\frac 1{\e^n} \)^{1-\gamma},$$
and thus, $\bar c^n$'s satisfy both  integrability conditions in  
the definition of ${\widetilde\cA}$'s in \eqref{defAa}. 

{\it Step 2.}
 Let us use Lemma \ref{lemConv} along a subsequence from {\it Step 1} that we do not relabel and such that
 \be\nn\label{81412}
\lim\limits_{n\to\infty} (d\kappa\times\bP)\[\left|\widehat c(x^n, \e^n)-\widetilde  c^n \right|\geq\delta, ~~\widehat c(x^n, \e^n) + \widetilde c^n\leq\frac 1\delta\]\geq\delta.
\ee
  In the argument below,  the notations from the proof of Lemma \ref{lemConv} are used.  
 For 
 \be\label{8154}
 \eta^n := \bE\[(\Gamma^n\xi^n)\cdot \kappa_T \],\quad n\in\bN,
 \ee one can show that\footnote{{The lower bounds on $\alpha^n$'s are obtained through estimates on the slopes of $F$.}} 
 $$\Gamma^n_t \geq \exp\(a\theta\int_0^t \(\bar c^n_s\)^{1-\frac 1\psi}\(\D Y^{n}_s\)^{-\frac 1\theta}ds\),
 $$
 for some constant $a>0$ (and where $\theta<0$ and {$\D Y^{n}$ is as in the proof of Lemma \ref{lemConv} corresponding to $c'=\widehat c(x^n, \e^n)$,  $c''=\widetilde c^n$, and $\lambda =\frac 12$.})

 Next, from Lemma \ref{lemck},  along a subsequence, which we do not relabel, we have 
\be\nn\label{8151}
\lim\limits_{k\to\infty}\sup\limits_{t\in[0,T]}\left|Y^{\widetilde c^k}_t - Y^{\widehat c(x,0)}_t\right| = 0, \quad \Pas.
\ee
 Further,   $Y^{\widehat c(x^n, \e^n)}$ is bounded from above by a real-valued process,   by the assumption of this theorem.
Therefore, as 
 $\D Y^{n}_t\leq \frac 12 Y_t^{\widehat c(x^n, \e^n)}+ \frac 12Y_t^{\widetilde c^n}$, $t\in[0,T]$, $\Pas$,  by the proof of Lemma \ref{lemConv}, we deduce that 
 \be\bs
 \Gamma^n_t 
 &\geq \exp\(a\theta\int_0^t (\bar c^n_s)^{1-\frac 1\psi}(\D Y^{n}_s)^{-\frac 1\theta}ds\)\\
 &\geq
 \exp\(a\theta\int_0^t \left(\widehat c_s(x^n, \e^n) +\widetilde c_s^n\right)^{1-\frac 1\psi}( Y_s^{\widehat c(x^n, \e^n)}+  Y_s^{\widetilde c^n})^{-\frac 1\theta}ds\)
.
  \end{split}
 \ee
 From the assumptions of this theorem 
  and Lemma \ref{lemck}, we obtain that 
\be\label{821}\bs
&\liminf\limits_{n\to\infty}\Gamma^n_t 
=:\widetilde \Gamma^\infty_t>0,\quad t\in[0,T],\quad \Pas.
\end{split}
\ee
 Let us consider the sequence $\xi^n$, $n\in\bN$. Following the proof of Lemma \ref{lemConv} (see \eqref{8141}), we observe that 
\be\label{8147}\xi^n_t\geq \widetilde \delta_1(\Delta Y_t^n)^{1-\frac 1\theta}1_{\{\left|\widehat c_t(x^n, \e^n)-\widetilde  c^n_t \right|\geq\delta, ~~\widehat c_t(x^n, \e^n) + \widetilde c^n_t\leq\frac 1\delta\}}, \quad t\in[0,T), \ee
and
\be\label{8146} \xi^n_T \geq\delta_21_{\{\left|\widehat c_T(x^n, \e^n)-\widetilde  c^n_T \right|\geq\delta, ~~\widehat c_T(x^n, \e^n) + \widetilde c^n_T\leq\frac 1\delta\}},
\ee
where constant $\widetilde \delta_1>0$ and $\delta_2>0$ depend on $\delta$ appearing in \eqref{8141} only. 
As $\D Y^n_t\geq \frac 1{2^\theta} Y^{\widetilde c^n}_t$, $t\in[0,T]$, $\Pas$, by the argument in Lemma \ref{lemck}
, we have 
\be\label{8145}
\liminf\limits_{n\to\infty}\(\D Y^n_t \)\geq\liminf\limits_{n\to\infty} \frac 1{2^\theta}Y^{\widetilde c^n}_t= \frac 1{2^\theta} Y_t^{\widehat c(x,0)}>0,~~ t\in[0,T],~~ \Pas.
\ee
  
 By the Dunford-Pettis compactness criterion (see, e.g.,  \cite[p. 26]{KaratzasShreve1}), there exists a weakly (in $\bL^1(d\kappa\times\bP)$) convergent subsequence of $\xi^n\wedge 1$, $n\in\bN$, whose limit is denoted by $\xi^\infty$. In view of 
 \eqref{8147}, \eqref{8146}, and \eqref{8145}, we have  $(d\kappa\times \bP)[\xi^\infty>0]>0$. 

{Let us pass to this subsequence that we do not relabel again}. The non-negativity of $\Gamma^n\xi^n$ (by the construction above) allows invoking  Fatou's lemma, which implies  that 
\be\nn\bs
&\liminf\limits_{n\to\infty}\bE\[(\Gamma^n(\xi^n\wedge 1))\cdot \kappa_T\] \\
=&\liminf\limits_{n\to\infty}
\( 
\bE\[(\widetilde \Gamma^\infty(\xi^n\wedge 1))\cdot \kappa_T\] + 
\bE\[\((\Gamma^n - \widetilde \Gamma^\infty)(\xi^n\wedge 1)\)\cdot \kappa_T\] 
\) \\
\geq& \bE\[(\widetilde\Gamma^\infty \xi^\infty)\cdot \kappa_T\]>0,
\end{split}
\ee
as  $(d\kappa\times \bP)[\xi^\infty>0]>0$ and $\widetilde \Gamma^\infty>0$, $(d\kappa\times \bP)-a.e.$,  as well as 
$$\lim\limits_{n\to\infty}\bE\[(\widetilde \Gamma^\infty(\xi^n\wedge 1))\cdot \kappa_T\] = \bE\[(\widetilde \Gamma^\infty\xi^\infty)\cdot \kappa_T\],$$ by the weak convergence in $\bL^1(d\kappa\times\bP)$ (here we recall that $\Gamma$'s take values in $[0,1]$ as $\theta<0$),  by Fatou's lemma (here, $\Gamma^\infty(\xi^n\wedge 1)$ is bounded from below by $-1$)  and \eqref{821}, we have
$$\liminf\limits_{n\to\infty}\bE\[\((\Gamma^n - \widetilde \Gamma^\infty)(\xi^n\wedge 1)\)\cdot \kappa_T\] \geq 
0.$$  
We conclude that 
\be\label{7311}
\liminf\limits_{n\to\infty}\eta^n>0.
\ee

{\it Step 3.} For $\bar c^n$, $n\in\bN$, defined in {\it Step 1} (see \eqref{8152}), let us consider the subsequence from {\it Step 2}. 
By Lemma \ref{lemConv}, we have
\be\label{7241}
\liminf_{n\to\infty}U^{\bar c^n}_0 \geq \liminf\limits_{n\to\infty} \( \frac 12u(x^n, \e^n) + \frac 12 U^{c^n}_0+ \bar\eta^n\),
\ee
where $\bar \eta^n = \frac {\eta^n}{\gamma-1}$ and $\eta^n$ are given in \eqref{8154}. 
It follows from Lemma \ref{lemck} that 
\be\label{7242}\lim \limits_{n\to\infty}U^{c^n}_0 = u(x,0).
\ee
On the other hand, as $\bar c^n\in{\widetilde\cA}\(\frac {x^n + x}2, \e^n\)$, we get 
\be\label{7243}
U^{\bar c^n}_0 \leq u\(\frac {x^n + x}2, \e^n\).
\ee
By Theorem \ref{mainThm1}, we have
\be\label{7244}
\liminf\limits_{n\to\infty}  u(x^n, \e^n) = u(x,0).
\ee
Therefore, in \eqref{7241}, via \eqref{7242}, \eqref{7243}, and \eqref{7244}, we conclude that  
$$u(x,0)\geq \liminf_{n\to\infty}U^{\bar c^n}_0 \geq \liminf\limits_{n\to\infty}  \( \frac 12 u(x^n, \e^n) + \frac 12 U^{\widetilde c^n}_0+ \bar \eta^n\)  \geq u(x,0) +\liminf\limits_{n\to\infty} \bar\eta^n,$$
which is impossible, as $\liminf\limits_{n\to\infty}\bar\eta^n = \liminf\limits_{n\to\infty}\frac{\eta^n}{\gamma-1}>0$ by \eqref{7311}.

%
%
\end{proof}
{ \begin{proof}[Proof of Theorem \ref{thmConvU}]
Let us recall that, for a given nonnegative consumption stream $c$,  $U^c$ was defined in \eqref{defUEZ} and $Y^c$ in \eqref{BSDEY}. The proof of Theorem \ref{thmConvU} is entirely similar to the proof of Lemma \ref{lemck}.  It relies on the truncation and the stability of BSDEs result as in \cite[Theorem 19.1.6, p. 472]{ElliotCohen}, so that we can show that  
 $$\lim\limits_{(x',\e) \to (x,0)}Y^{\widehat c(x', \e)} =  Y^{\widehat c(x,0)},\quad ucp\quad {\rm and} \quad 
\lim\limits_{(x',\e) \to (x,0)}U^{\widehat c(x', \e)} =  U^{\widehat c(x,0)},\quad ucp.$$
We omit further details for brevity. 
 \end{proof}
\begin{proof}[Proof of Theorem \ref{lemWealthPr}]
Let us consider a sequence $(x_n,\e_n)$, $n\in\bN$, convergent to $(x,0)$. Without loss of generality, we will suppose that $x_n>0$ and $\e_n\in(-\e_0, \e_0)$, $n\in\bN$. 
Let us denote $$X^n = X^{\widehat c(x_n,\e_n)}, \quad N^n = N^{\e_n}, \quad c^n = \widehat c(x_n, \e_n), \quad n\in\bN,\quad D = \widehat D(y,0),$$  and set 
\be\label{3278}
\bs
D^n&: = DN^n,\quad {L}^n : = \frac 1{x_n}\left(X^nD^n + \int_0^\cdot D^n_s c^n_sd\kappa_s \right).
\end{split}\ee
Then, by Lemma \ref{lemAdm}, $D^n\in\cD( y, \e_n)$ and thus ${L}^n$, $n\in\bN$, is a sequence of nonnegative supermartingales. 
Since $(d\kappa\times\bP) \text-\lim\limits_{n\to\infty}c^n = \widehat c(x, 0)$ by Theorem \ref{mainThm2} and  $(d\kappa\times\bP)\text-\lim\limits_{n\to\infty}D^n = D$ by \eqref{3278} and the assumption that $(d\kappa\times\bP) \text-\lim\limits_{\e\to 0}N^\e =1$, we pass to a subsequence, which we do not relabel and suppose that 
$\lim\limits_{n\to\infty}D^nc^n = D\widehat c(x, 0)$, $(d\kappa\times\bP)$-a.e..  Therefore, using Fatou's lemma, we get 
\be\label{3271}
\liminf\limits_{n\to\infty}\int_0^T D^n_s c^n_sd\kappa_s \geq \int_0^T D_s \widehat c_s(x,0)d\kappa_s, \quad \Pas. 
\ee
Let us further set 
\be\label{defL}
{L}: = \frac 1x\left( X^{\widehat c(x,0)} D + \int_0^\cdot D_s\widehat c_s(x,0)d\kappa_s\right).
\ee
The optimality of $\widehat c(x,0)$ implies that $X^{\widehat c(x,0)}_T = 0$, $\Pas$, as it is optimal to consume everything that is left at maturity.  Likewise, the optimality of $c^n$ implies that $X^n_T = 0$, $\Pas$, for every $n\in\bN$. Therefore, \eqref{3278}, \eqref{3271}, and \eqref{defL} result in  
\be\label{3272}
\liminf\limits_{n\to\infty}{L}^n_T \geq {L}_T,\quad \Pas.
\ee
From the respective definitions of $L^n$, $n\in\bN$, and $L$, we conclude that 
\be\label{3275}
{L}^n_0 =  D_0 = {L}_0 , \quad n\in\bN.
\ee
As ${L}^n$, $n\in\bN$, are nonnegative c\`adl\`ag supermartingales and ${L}$ is a nonnegative c\`adl\`ag  martingale, from \eqref{3272} and \eqref{3275}, one can show that 
\be\label{3276}
\lim\limits_{n\to\infty}{L}^n = {L},\quad ucp.
\ee
Consequently, and in particular, passing to another subsequence, which we do not relabel, we get
\be\label{32711}
\lim\limits_{n\to\infty}\int_0^T D^n_s c^n_sd\kappa_s = \int_0^T D_s \widehat c_s(x,0)d\kappa_s, \quad \Pas. 
\ee
Next, similarly to \eqref{3271}, for every $t\in[0,T]$, we deduce that  
\be\label{481}
\liminf\limits_{n\to\infty}\int_{t}^{T} D^n_s c^n_sd\kappa_s \geq \int_{t}^{T} D_s \widehat c_s(x,0)d\kappa_s, \quad \Pas, 
\ee
and 
\be\label{482}
\liminf\limits_{n\to\infty}\int_{0}^{t} D^n_s c^n_sd\kappa_s \geq \int_{0}^{t} D_s \widehat c_s(x,0)d\kappa_s, \quad \Pas.
\ee
Therefore, from \eqref{32711} and \eqref{481}, we get 
\be\label{483}
\bs
\limsup\limits_{n\to\infty}\int_{0}^{t} D^n_s c^n_sd\kappa_s 
&= \limsup\limits_{n\to\infty}\(\int_{0}^{T} D^n_s c^n_sd\kappa_s - \int_{t}^{T} D^n_s c^n_sd\kappa_s\)\\
&\leq \int_{0}^{t} D_s \widehat c_s(x,0)d\kappa_s, \quad \Pas. 
\end{split}
\ee
In turn, \eqref{482} and \eqref{483} imply that 
\be\label{484}
\lim\limits_{n\to\infty}\int_{0}^{t} D^n_s c^n_sd\kappa_s = \int_{0}^{t} D_s \widehat c_s(x,0)d\kappa_s, \quad \Pas,
\ee
where the last equality holds for every $t\in[0,T]$. 
As the processes $\int_{0}^{\cdot} D^n_s c^n_sd\kappa_s$, $n\in\bN$, and $\int_{0}^{\cdot} D_s \widehat c_s(x,0)d\kappa_s$, are c\'adl\'ag monotone, from \eqref{484}, we get 
\be\label{3279}
\lim\limits_{n\to\infty}\sup\limits_{t\in[0,T]}\left|\int_0^t D^n_s c^n_sd\kappa_s - \int_0^t D_s \widehat c_s(x,0)d\kappa_s\right| = 0, \quad \Pas.
\ee 
Finally, as $D^n$'s and $D$ are strictly positive and $(d\kappa\times\bP)\text-\lim\limits_{n\to\infty} D^n = D$, from \eqref{3276} and \eqref{3279}, using \cite[Thoerem 1.6.2, p. 46]{Du}, we deduce  that $X^n = \frac {{L}^n - \int_0^\cdot D^n_sc^n_sd\kappa_s}{D^n}$, $n\in\bN$, converges to $X^{\widehat c(x,0)}=\frac {{L} - \int_0^\cdot D_s\widehat c_s(x,0)d\kappa_s}{D}$ in measure $(d\kappa\times\bP)$. If $\lim\limits_{\e\to 0 }N^\e= 1$, $ucp$, then, similarly, from \eqref{3278},  \eqref{3276}, and \eqref{3279}, using \cite[Thoerem 1.6.2, p. 46]{Du}, we conclude that $$\lim\limits_{n\to\infty} X^n = X^{\widehat c(x,0)},\quad ucp.$$
\end{proof}
}
\bibliographystyle{alpha}\bibliography{literature1}

\begin{thebibliography}{HHJ21b}

\bibitem[AH21a]{huang2}
J.~Aurand and Y.-J. Huang.
\newblock Epstein-{Z}in utility maximization on a random horizon.
\newblock arXiv:1903.08782 [q-fin.MF], 2021.

\bibitem[AH21b]{huang1}
J.~Aurand and Y.-J. Huang.
\newblock Mortality and healthcare: A stochastic control analysis under
  {E}pstein-{Z}in preferences.
\newblock {\em SIAM J. Control Optim.}, 59(5):4051--4080, 2021.

\bibitem[BH06]{BriandHu}
P.~Briand and Y.~Hu.
\newblock B{SDE} with quadratic growth and unbounded terminal value.
\newblock {\em Probab. Theory Related Fields}, 136(4):604--618, 2006.

\bibitem[CE12]{CohenElliot2}
S.~Cohen and R.~Elliot.
\newblock Existence, uniqueness and comparisons for {BSDE}s in general spaces.
\newblock {\em Ann. Appl. Probab.}, 40(5):2264--2297, 2012.

\bibitem[CE15]{ElliotCohen}
S.~Cohen and R.~Elliot.
\newblock {\em Stochastic Calculus and Applications}.
\newblock Springer, 2015.

\bibitem[CR07]{CarRas}
L.~Carasus and M.~Rasonyi.
\newblock Optimal strategies and utility-based prices converge when agents'
  preferences do.
\newblock {\em Math. Oper. Res.}, 32(1):102--117, 2007.

\bibitem[DE92]{DuffieEpstein92b}
D.~Duffie and L.~Epstein.
\newblock Asset pricing with stochastic differential utility.
\newblock {\em Rev. Financ. Stud.}, 5(3):411--436, 1992.

\bibitem[DS94]{DuffieSkiadas94}
D.~Duffie and C.~Skiadas.
\newblock Continuous-time security pricing: {A} utility gradient approach.
\newblock {\em J. Math. Econ.}, 23(2):107--131, 1994.

\bibitem[Dur05]{Du}
R.~Durret.
\newblock {\em Probability: Theory and Examples}.
\newblock Duxbury Advanced Series, 3rd edition, 2005.

\bibitem[EZ89]{EpsteinZin}
L.~Epstein and S.~Zin.
\newblock Substitution, risk aversion, and the temporal behavior of consumption
  and asset returns: A theoretical framework.
\newblock {\em Econometrica}, 57:937--969, 1989.

\bibitem[Had02]{Had02}
J.~Hadamard.
\newblock Sur les problèmes aux d{\'e}riv{\'e}es partielles et leur
  signification physique.
\newblock {\em Princeton University Bulletin}, 13:49--52, 1902.

\bibitem[HHJ21a]{HerdHob2}
M.~Herdegen, D.~Hobson, and J.~Jerome.
\newblock The infinite horizon investment-consumption problem for
  {E}pstein-{Z}in stochastic differential utility.
\newblock arXiv:2107.06593 [q-fin.MF], 2021.

\bibitem[HHJ21b]{HerdHob1}
M.~Herdegen, D.~Hobson, and J.~Jerome.
\newblock Proper solutions for {E}pstein-{Z}in stochastic differential utility.
\newblock arXiv:2112.06708 [q-fin.MF], 2021.

\bibitem[HIM05]{imHuMuller05}
Y.~Hu, P.~Imkeller, and M.~M{\"u}ller.
\newblock Utility maximization in incomplete markets.
\newblock {\em Ann. Appl. Probab.}, 15(3):1691--1712, 2005.

\bibitem[JN04]{JouiniNapp}
E.~Jouini and C.~Napp.
\newblock Convergence of utility functions and convergence of optimal
  strategies.
\newblock {\em Finance Stoch.}, 8:133--144, 2004.

\bibitem[Kaz94]{Kazamaki}
N.~Kazamaki.
\newblock {\em Continuous Exponential Martingales and {BMO}}.
\newblock Lecture Notes in Mathematics, 1579. Springer-Verlag, 1994.

\bibitem[KK07]{KarKar07}
I.~Karatzas and K.~Kardaras.
\newblock The num\'{e}raire portfolio in semimartingale financial models.
\newblock {\em Finance Stoch.}, 11(4):447--493, 2007.

\bibitem[KP78]{KrepsPort}
D.~Kreps and E.~Porteus.
\newblock Temporal resolution of uncertainty and dynamic choice theory.
\newblock {\em Econometrica}, 46(1):185--200, 1978.

\bibitem[KS98]{KaratzasShreve1}
I.~Karatzas and S.~Shreve.
\newblock {\em Brownian Motion and Stochastic Calculus}.
\newblock Springer, 2nd edition, 1998.

\bibitem[KS99]{KS}
D.~Kramkov and W.~Schachermayer.
\newblock The asymptotic elasticity of utility functions and optimal investment
  in incomplete markets.
\newblock {\em Ann. Appl. Probab.}, 9(3):904--950, 1999.

\bibitem[KS03]{KS2003}
D.~Kramkov and W.~Schachermayer.
\newblock Necessary and sufficient conditions in the problem of optimal
  investment in incomplete markets.
\newblock {\em Ann. Appl. Probab.}, 13(4):1504--1516, 2003.

\bibitem[KSS17]{Kraft17}
H.~Kraft, T.~Seiferling, and F.~T. Seifried.
\newblock Optimal consumption and investment with {E}pstein-{Z}in recursive
  utility.
\newblock {\em Finance Stoch.}, 17:187--226, 2017.

\bibitem[K{\v{Z}}11]{KarZit11}
K.~Kardaras and G.~{\v{Z}}itkovi{\'c}.
\newblock Stability of the utility maximization problem with random endowment
  in incomplete markets.
\newblock {\em Math. Finance}, 21(2):313--333, 2011.

\bibitem[Mos15]{Mostovyi2015}
O.~Mostovyi.
\newblock Necessary and sufficient conditions in the problem of optimal
  investment with intermediate consumption.
\newblock {\em Finance Stoch.}, 19(1):135--159, 2015.

\bibitem[Mos20]{MostovyiNumeraire}
O.~Mostovyi.
\newblock Asymptotic analysis of the expected utility maximization problem with
  respect to perturbations of the num\'{e}raire.
\newblock {\em Stoch. Process. Appl.}, 130(7):4444--4469, 2020.

\bibitem[Mos21]{MostovyiFPP}
O.~Mostovyi.
\newblock Stability of the indirect utility process.
\newblock {\em {SIAM}. J. Financ. Math.}, 12(2):641--671, 2021.

\bibitem[MX18]{MX}
A.~Matoussi and H.~Xing.
\newblock Convex duality for {E}pstein-{Z}in stochastic differential utility.
\newblock {\em Math. Finance}, 28(4):991--1019, 2018.

\bibitem[Nut10]{MarcelOp}
M.~Nutz.
\newblock The opportunity process for optimal consumption and investment with
  power utility.
\newblock {\em Math. Financ. Econ.}, 3(3):139--159, 2010.

\bibitem[Par99]{Pard}
\'{E}. Pardoux.
\newblock {BSDE}s, weak convergence and homogenization of semilinear pdes.
\newblock In et.~al. Clark, F.H., editor, {\em Nonlinear Analysis, Differential
  Equations and Control, Montreal, QC}, volume 528 of {\em 1998. {NATO} Sci.
  Ser. C. Math. Phys. Sci.}, pages 503--549. Kluwer Acad. Publ., Dordrecht,
  1999.

\bibitem[VS18]{VeraguasSilva}
J.B. Veraguas and F.~Silva.
\newblock Sensitivity analysis for expected utility maximization in incomplete
  {B}rownian market models.
\newblock {\em Math. Finan. Econ.}, 12(3):387--411, 2018.

\bibitem[Xin17a]{Hao17}
H.~Xing.
\newblock Consumption-investment optimization with {E}pstein-{Z}in utility in
  incomplete markets.
\newblock {\em Finance Stoch.}, 21(1):227--262, 2017.

\bibitem[Xin17b]{HaoStab17}
H.~Xing.
\newblock Stability of the exponential utility maximization problem with
  respect to preferences.
\newblock {\em Math. Finance}, 27(1):38--67, 2017.

\end{thebibliography}
\end{document}